\documentclass[10pt,letterpaper]{article}

\usepackage[T1]{fontenc}
\usepackage{lmodern}
\usepackage{microtype}
\usepackage[margin=1in]{geometry}
\usepackage[hyphens]{url}
\usepackage{graphicx}
\usepackage[authoryear,round]{natbib}
\let\cite\citep
\usepackage{caption}
\usepackage{algorithm}
\usepackage{algorithmic}
\usepackage{tikz}
\usetikzlibrary{arrows.meta}
\usepackage{booktabs}
\usepackage{amssymb}
\usepackage{amsmath}
\usepackage{amsthm}
\usepackage{bm}
\usepackage{xcolor}
\usepackage{soul}
\usepackage{placeins}
\usepackage[hidelinks]{hyperref}
\usepackage[nameinlink]{cleveref}
\crefname{assumption}{assumption}{assumptions}
\Crefname{assumption}{Assumption}{Assumptions}
\hypersetup{
    pdftitle={Estimating Causal Bounds from Observational Data With Neural Generative Models},
    pdfauthor={Max Zhu, Srivathsan Amruth, Sina Akbari, Qingyuan Zhao}
}
\newtheorem{theorem}{Theorem}
\newtheorem{proposition}{Proposition}
\newtheorem{assumption}{Assumption}
\newtheorem{corollary}{Corollary}
\newcommand{\pa}[1]{\operatorname{Pa}_{#1}}
\newcommand{\E}{\mathbb{E}}
\newcommand{\g}{\mathcal{G}}

\title{Estimating Causal Bounds from Observational Data With Neural Generative Models}
\author{%
    Max Zhu\thanks{Corresponding author: \href{mailto:mz406@cam.ac.uk}{mz406@cam.ac.uk}.} \quad Srivathsan Amruth \quad Sina Akbari \quad Qingyuan Zhao\\[0.5em]
    \small Department of Pure Mathematics and Mathematical Statistics\\
    \small University of Cambridge
}
\date{}

\begin{document}

\maketitle

\begin{abstract}
Identifying causal bounds from discrete observational data is a fundamental problem in causal inference. While exact analytical bounds exist for certain graphs, this is not generally the case. In this paper, we introduce a framework for estimating such bounds using neural generative models together with a new extrapolation procedure to estimate these bounds more accurately. Across multiple causal inference settings, it produces bounds that are consistently closer to the true bounds than those obtained with existing general neural approaches. We validate the framework on a range of benchmark problems for which exact bounds are available.
\end{abstract}

\begin{figure*}[t]
    \centering
    \resizebox{\textwidth}{!}{\begin{tikzpicture}[
    >=Stealth,
    arrow/.style={
        ->,
        draw=gray!80,
        line width=0.9pt
    },
    treatment/.style={
        circle,
        draw=orange!90!black,
        fill=orange!5,
        line width=1.2pt,
        minimum size=12mm
    },
    latent/.style={
        circle,
        draw=blue!80,
        fill=blue!3,
        line width=1.2pt,
        minimum size=12mm
    },
    outcome/.style={
        circle,
        draw=green!60!black,
        fill=green!6,
        line width=1.2pt,
        minimum size=12mm
    },
    observed/.style={
        circle,
        draw=violet!80,
        fill=violet!6,
        line width=1.2pt,
        minimum size=12mm
    }
]

\node[observed]  (z-iv) at (0,0)   {$Z$};
\node[treatment] (t-iv) at (2.5,0) {$T$};
\node[outcome]   (y-iv) at (5.5,0) {$Y$};
\node[latent]    (u-iv) at (4,2)   {$U$};
\draw[arrow] (z-iv) -- (t-iv);
\draw[arrow] (u-iv) -- (t-iv);
\draw[arrow] (u-iv) -- (y-iv);
\draw[arrow] (t-iv) -- (y-iv);

\begin{scope}[xshift=8.2cm]
    \node[observed]  (z-proxy) at (0,0)   {$Z$};
    \node[treatment] (t-proxy) at (2.5,0) {$T$};
    \node[outcome]   (y-proxy) at (5.5,0) {$Y$};
    \node[latent]    (u-proxy) at (4,2)   {$U$};
    \node[observed]  (w-proxy) at (6.5,2) {$W$};
    \draw[arrow] (z-proxy) -- (t-proxy);
    \draw[arrow] (t-proxy) -- (y-proxy);
    \draw[arrow] (u-proxy) -- (t-proxy);
    \draw[arrow] (u-proxy) -- (y-proxy);
    \draw[arrow] (u-proxy) -- (w-proxy);
\end{scope}

\begin{scope}[xshift=17cm]
    \node[observed]  (z-med) at (0,0)   {$Z$};
    \node[treatment] (t-med) at (2.5,0) {$T$};
    \node[observed]  (m-med) at (5,0)   {$M$};
    \node[outcome]   (y-med) at (7.5,0) {$Y$};
    \node[latent]    (u-med) at (3.75,2) {$U$};
    \draw[arrow] (z-med) -- (t-med);
    \draw[arrow] (t-med) -- (m-med);
    \draw[arrow] (m-med) -- (y-med);
    \draw[arrow] (u-med) -- (t-med);
    \draw[arrow] (u-med) -- (m-med);
\end{scope}

\end{tikzpicture}}
    \caption{Causal graphs used in the simulation studies. Left: instrumental variable graph (IV). Centre: instrumental variable graph with a proxy variable (Proxy). Right: mediated instrumental variable graph (Mediated). $U$ is exogenous and the other nodes are endogenous categorical variables.}
    \label{fig:causal_graphs}
\end{figure*}

\section{Introduction}

Estimating the effects of interventions from observational data is a central problem in causal inference. In many applications, however, the available observations and causal assumptions do not uniquely determine the causal quantity of interest. Hidden confounding, incomplete measurements, or limited experimental access may permit multiple structural causal models to generate
the same observational distribution while implying different responses to an intervention. In such settings, reporting a single causal estimate can conceal uncertainty that persists even with complete knowledge of the observational
distribution.

Partial identification provides a principled alternative to point estimation
\cite{manski2003partial}. Rather than selecting a single observationally
compatible model, partial identification characterises the range of values
attained by a causal query over all models satisfying the assumed causal
structure and observational constraints. For example, the identified interval
for an average treatment effect is obtained by minimising and maximising the
effect over all structural models that reproduce the observed distribution.
The resulting interval separates uncertainty arising from causal
non-identifiability from ordinary uncertainty in parameter estimation.

For discrete causal models, causal bounds can be derived using linear,
multilinear, or polynomial optimisation. Classical results provide sharp
bounds in important special cases, such as instrumental-variable models
\cite{balke1997bounds}. More general approaches use logical relations between
counterfactual events, canonical structural causal models, and automated polynomial-programming formulations \cite{categorical_iv, sachs2023general,boushehrian26fundamental}. These approaches
offer valuable sharpness or certification guarantees, but the size of the
resulting optimisation problem can increase rapidly with the number of
variables, latent-variable configurations, and possible response functions. Furthermore, they have narrow applicability to general problems. 

Recently, generative neural network approaches have been applied to this problem \cite{BoundedCausalEffects, NeuralPartlID, balazadeh2022partial}. These attempt to find the extremal functions that satisfy the bound, providing both the bounds and functions that reach these bounds. While these are applicable to a wide range of problems, we find that their accuracy is limited. 

This paper introduces a differentiable approach to causal partial
identification for finite discrete structural causal models. Each structural
mechanism is represented by a neural network whose output defines a
conditional categorical distribution. The graph may contain explicit latent
variables, which are marginalised to obtain the induced distribution over the
observed variables. 

We fit a neural network model by minimising the Energy Distance \cite{energy_szekely} between the predictive model and the observational distribution, with a weighted causal loss.
The Energy distance is a measure of distributional disagreement, minimised if the predicted and true distributions are identical. A novel extrapolation procedure, with theoretical justification, is introduced to estimate the extremal causal effects compatible with the observational distribution. 

\section{Related Work}

\subsection{Optimisation-based causal bounds}
Partial identification characterises the range of causal quantities compatible with the observed distribution and the assumed causal structure \citep{manski2003partial}. For discrete structural causal models, classical approaches represent latent structural mechanisms through deterministic response functions and optimise over their distributions. This yields linear programmes in important settings, including the instrumental-variable model with binary \citep{balke1997bounds} and general categorical variables \citep{categorical_iv}. Symbolic methods extend this construction to broader classes of causal graphs and queries and are implemented in tools such as \texttt{causaloptim} \citep{sachs2023general,causaloptim_pkg}. More general discrete problems can be expressed as polynomial programs \cite{zhang_partial}. Autobounds automatically constructs such programs and applies spatial branch-and-bound to obtain progressively tighter certified bounds \citep{Autobounds}. Pruning and decomposition techniques can improve scalability for particular classes of graphs \citep{shridharan2023scalable}. These methods offer sharpness or certification guarantees in their supported settings, but their computational cost can grow rapidly with the number of variables and categories, and may fail to converge in more complex settings.

\subsection{Neural approaches to partial identification}
Neural causal models \cite{causal_neural_connection} provide an alternative to explicit response-function enumeration by representing structural mechanisms with flexible function approximators. Their connection to structural causal models and counterfactual reasoning has been studied theoretically, including representation and identifiability properties \citep{xia2022neural}. Bounded Counterfactual Causal Effects (BCCE) parametrises structural mechanisms and latent variables using implicit generative models \cite{gan} and employs adversarial training to match the observational distribution while optimising the causal effect \citep{BoundedCausalEffects}. Related approaches impose distributional constraints using Wasserstein or Sinkhorn discrepancies and optimise them through Lagrangian procedures \citep{balazadeh2022partial,NeuralPartlID}. NeuralPartialID further establishes approximation and consistency results for continuous problems under regularity conditions such as Lipschitz continuity \citep{NeuralPartlID}. Stochastic causal programming and implicit-generative approaches also address partial identification with continuous or multivariate treatments \citep{balazadeh2022partial, padh2023stochastic}. These methods broaden the range of causal models that can be considered. 

\subsection{Relation to our method}
Our method builds on the neural causal partial-identification approaches. However, there are two key differences. Firstly, rather than comparing generated and observed samples through a learnt discriminator or an optimal-transport subproblem, we directly compute the probability assigned to every observed state and evaluate a closed-form energy distance between the resulting probability vectors. This avoids adversarial training and stochastic sampling of categorical model outputs.

The main distinction is our treatment of the exact-fit limit. Existing neural approaches attempt to optimise the causal query subject to, or jointly with, an observational-distribution constraint. In practice, this optimisation becomes difficult when the distributional discrepancy is required to be extremely small. We instead solve a signed family of stable penalised problems and extrapolate the causal contrast to the zero-penalty limit. Our perturbation analysis shows that, within a locally regular regime, the causal contrast varies linearly with the penalty parameter while the energy-distance loss varies quadratically. This provides both a principled extrapolation rule and a diagnostic for selecting the range over which extrapolation is valid.

\section{Preliminaries} \label{sec:prelim}

We introduce the causal and probabilistic notation used throughout.
Random variables are denoted by upper case letters (e.g., $U,X$), their realisations by lower case letters (e.g., $u,x$), and vectors/sets of variables by bold letters (e.g., $\mathbf U, \mathbf X$).
For a vector $\mathbf{X} = (X_1,\ldots,X_n)$, where each $X_i$ takes values in the finite set $\mathcal X_i$, the joint state space is 
\begin{equation}\label{eq:statespace}
\mathcal S=\prod_{i=1}^{n}\mathcal{X}_i
    =
    \{\mathbf s_1,\ldots,\mathbf s_K\},
\end{equation}
where $K=\prod_{i=1}^{n}\vert\mathcal{X}_i\vert$.
A probability distribution on $\mathcal S$ can be represented by an element of the simplex 
\[
    \Delta^{K-1}
    =
    \left\{
        p\in\mathbb{R}^{K}_{\geq 0}
        :
        \sum_{k=1}^{K}p_k=1
    \right\}.
\]

\subsection{Structural equation models and causal graphs}

A \emph{structural equation model} (SEM) \cite{pearl_2009} is a tuple
\[
M = (\mathbf U, \mathbf X,\mathbf F, P_{\mathbf U}),
\]
where
\(\mathbf U = (U_1, \dots, U_m)\) is a set of independent \emph{exogenous} variables, 
\(P_{\mathbf U}\) is their probability distribution,
\(\mathbf X = (X_1, \dots, X_n)\) is a set of \emph{endogenous} variables, 
and
\(\mathbf F = (f_1, \dots, f_n)\) is a set of structural functions.
Each endogenous variable is generated according to
\[
    X_i = f_i\bigl(\pa{\mathbf X}(i), \pa{\mathbf U}(i)\bigr),
\]
where \(\pa{\mathbf X}(i) \subseteq \mathbf X \setminus \{X_i\}\)
and
\(\pa{\mathbf U}(i) \subseteq \mathbf U\)
denote the set of endogenous and exogenous parents of \(X_i\), respectively.

We assume that the directed relations between endogenous variables are acyclic.
Consequently, the structural assignments can be evaluated in topological order.
This induces a distribution on the endogenous variables which we denote by $P_M(\mathbf X)$.
The SEM can be represented by an acyclic directed graph (DAG) with vertices $\mathbf X\cup\mathbf U$, with an edge $X_i\to X_j$ whenever $X_i\in\pa{\mathbf X}(j)$, and an edge $U_i\to X_j$ whenever $U_i\in\pa{\mathbf U}(j)$.
We denote this graph by $\g$.

Throughout this work, we consider a setting where $\mathbf X$ consists of discrete variables and $\mathbf U$ follows an unknown distribution that may be continuous or discrete. 
Sampling \(\mathbf{U}\sim P_{\mathbf{U}}\) and evaluating
the structural equations in topological order induces an
observational distribution on
\(\mathcal{S}\). 

\subsection{Causal estimands and partial identification}\label{sec:partial-identification}

A \emph{do-intervention} on a variable $(X_i \in \mathbf X)$, setting it to a value $x$, is denoted by $
\operatorname{do}(X_i=x)$. It modifies the structural equation model by replacing the structural equation for $X_i$,
\[
X_i = f_i\left(\pa{X}(i), \pa{U}(i)\right),
\]
with the constant assignment
\[
X_i := x.
\]
All other structural equations remain unchanged.
The intervention removes the causal influence of the original parents.
This corresponds to removing all incoming edges to $X_i$ in the causal DAG representing the model.
We denote the resulting interventional distribution as 
\[
P_M\bigl(\mathbf X\setminus\{X\} \mid \operatorname{do}(X=x)\bigr).
\]

Let $a$ denote a scalar causal estimand, and $a(M)$ denote this estimand evaluated in model $M$.
For instance, $a(M)$ can denote the average treatment effect (ATE) of a binary treatment $X$ on an outcome $Y$:
\begin{equation*}
\begin{split}
    \operatorname{ATE}_{X\to Y}&(M):=
    \E_M[Y\mid do(X=1)]
    -
    \E_M[Y\mid do(X=0)],
\end{split}
\end{equation*}
which describes the expected causal difference on $Y$ between applying a positive intervention (1) or applying a negative intervention (0). 

Given a causal graph $\g$ and the observed distribution $P_{\mathrm{obs}}$ of the variables $\mathbf X$,
the observational equivalence class is
\begin{equation}\label{eq:equiv}
\begin{split}
    &\mathcal{M}(\g,P_{\mathrm{obs}})
    =
    \left\{
        M :
        M \text{ is compatible with } G
        \text{ and }
        P_M(\mathbf X)=P_{\mathrm{obs}}
    \right\},
\end{split}
\end{equation}
that is, the set of SEMs where parent sets agree with $\g$, and the induced distribution over $\mathbf X$ agrees with $P_{\mathrm{obs}}$.
The identified set of a
causal estimand $a$ given $\g$ and $P_{\mathrm{obs}}$ is then
\[
    \mathcal{I}_a(\g,P_{\mathrm{obs}})
    =
    \left\{
        a(M):
        M\in\mathcal{M}(\g,P_{\mathrm{obs}})
    \right\}.
\]
When this set is a singleton, we say $a$ is \emph{point-identified}.
Otherwise, it is only \emph{partially identified}, and its (sharp) lower and upper bounds are 
\begin{equation}\label{eq:bounds}
    \underline{a}
    =
    \inf_{M\in\mathcal{M}(\g,P_{\mathrm{obs}})} a(M),
    \quad
    \overline{a}
    =
    \sup_{M\in\mathcal{M}(\g,P_{\mathrm{obs}})} a(M).
\end{equation}
When the extrema are attained, the infimum and supremum may
be replaced by a minimum and maximum. These bounds describe
uncertainty due to causal non-identifiability even when
\(P_{\mathrm{obs}}\) is known exactly.


For the rest of this work, we focus on the ATE, although our method is applicable to arbitrary causal estimands.

\subsection{Energy distance}

Let \(P\) and \(Q\) be probability distributions on \(\mathbb{R}^d\). Let
\(\mathbf X,\mathbf X' \overset{\mathrm{iid}}{\sim} P\) and
\(\mathbf Y,\mathbf Y' \overset{\mathrm{iid}}{\sim} Q\), with all random variables mutually
independent. For \(0 < \beta < 2\), we use the following convention for the energy distance between
\(P\) and \(Q\) \cite{energy_distance}.
\begin{equation}
\operatorname{ED}_{\beta}(P,Q)
=
2\,\mathbb{E}\lVert \mathbf X-\mathbf Y\rVert_2^\beta
-
\mathbb{E}\lVert \mathbf X-\mathbf X'\rVert_2^\beta
-
\mathbb{E}\lVert \mathbf Y-\mathbf Y'\rVert_2^\beta.
\label{eq:energy_distance}
\end{equation}


The energy distance measures distributional discrepancy through expected pairwise distances and therefore does not require a probability density evaluation. Because it can be estimated using only samples and is differentiable, it can also serve as a loss function for training generative models without requiring access to the model likelihood.

In the finite discrete case considered in this paper, define the pairwise distance matrix
\begin{equation*}
    \mathbf D_{k\ell}
    =
    \|\mathbf s_k-\mathbf s_\ell\|_2^\beta,
    \qquad
    0<\beta<2.
\end{equation*}

\label{prop:matrix_energy_score}

\begin{proposition} [Finite-support energy distance] \label{theorem:energy_dist}
Let \(\mathbf{p},\mathbf{q}\in\Delta^{K-1}\) be the probability
vectors on \(\mathcal{S}\). Then
\begin{equation}
\label{eq:discrete_energy_distance}
    \begin{split}\mathrm{ED}_\beta(\mathbf p,\mathbf q)
    &=2\mathbf p^\top\mathbf D\mathbf q
    -\mathbf p^\top\mathbf D\mathbf p
    -\mathbf q^\top\mathbf D\mathbf q
    =
    -(\mathbf p-\mathbf q)^\top
    \mathbf D
    (\mathbf p-\mathbf q).
    \end{split}
\end{equation}
\end{proposition}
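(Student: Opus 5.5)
We specialise the definition \eqref{eq:energy_distance} to distributions supported on the finite set $\mathcal S=\{\mathbf s_1,\ldots,\mathbf s_K\}$ and write each of the three expectations as a finite double sum. Let $\mathbf X,\mathbf X'\overset{\mathrm{iid}}{\sim}\mathbf p$ and $\mathbf Y,\mathbf Y'\overset{\mathrm{iid}}{\sim}\mathbf q$, all mutually independent, so that $\Pr(\mathbf X=\mathbf s_k)=p_k$ and $\Pr(\mathbf Y=\mathbf s_\ell)=q_\ell$.

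\textbf{First identity.} By independence, the joint law of $(\mathbf X,\mathbf Y)$ assigns mass $p_kq_\ell$ to $(\mathbf s_k,\mathbf s_\ell)$. Hence
\[
\mathbb E\|\mathbf X-\mathbf Y\|_2^\beta=\sum_{k,\ell}p_kq_\ell\,\|\mathbf s_k-\mathbf s_\ell\|_2^\beta=\sum_{k,\ell}p_k\mathbf D_{k\ell}q_\ell=\mathbf p^\top\mathbf D\mathbf q .
\]
The same computation with the independent pairs $(\mathbf X,\mathbf X')$ and $(\mathbf Y,\mathbf Y')$ gives $\mathbf p^\top\mathbf D\mathbf p$ and $\mathbf q^\top\mathbf D\mathbf q$. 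Substituting these into \eqref{eq:energy_distance} yields the first equality in \eqref{eq:discrete_energy_distance}.

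\textbf{Second identity.} Expand the quadratic form:
\[
(\mathbf p-\mathbf q)^\top\mathbf D(\mathbf p-\mathbf q)=\mathbf p^\top\mathbf D\mathbf p-\mathbf p^\top\mathbf D\mathbf q-\mathbf q^\top\mathbf D\mathbf p+\mathbf q^\top\mathbf D\mathbf q .
\]
Since $\|\mathbf s_k-\mathbf s_\ell\|=\|\mathbf s_\ell-\mathbf s_k\|$, the matrix $\mathbf D$ is symmetric, so $\mathbf q^\top\mathbf D\mathbf p=\mathbf p^\top\mathbf D\mathbf q$. The two cross terms therefore combine to $-2\mathbf p^\top\mathbf D\mathbf q$, and negating the expansion gives exactly $2\mathbf p^\top\mathbf D\mathbf q-\mathbf p^\top\mathbf D\mathbf p-\mathbf q^\top\mathbf D\mathbf q$.

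There is no real obstacle here; the argument is bookkeeping. The points that need care are the following.
\begin{itemize}
\item The factorisation $p_kq_\ell$ relies on the mutual independence assumed in the definition.
\item The cross terms can only be merged because $\mathbf D$ is symmetric.
\item The states $\mathbf s_k$ must be treated as vectors in $\mathbb R^n$, with coordinates given by the category labels, so that $\mathbf D$ is well defined.
\end{itemize}
All expectations are finite sums, so no integrability issue arises. The constraint $0<\beta<2$ is not used in this algebraic identity; it matters only for the positivity and metric properties of $\mathrm{ED}_\beta$.
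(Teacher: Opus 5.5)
Your proof is correct and takes the same approach as the paper. You write each of the three expectations as a finite double sum to obtain $\mathbf p^\top\mathbf D\mathbf q$, $\mathbf p^\top\mathbf D\mathbf p$ and $\mathbf q^\top\mathbf D\mathbf q$, substitute these into the definition, and use the symmetry of $\mathbf D$ to merge the cross terms in the final equality.
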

The full proof is given in Appendix~\ref{app:finite_support_energy_distance}. In short, since we have finite support, every pairwise distance can be represented as an entry in $\mathbf{D}$, which leads to the simplified weighted probability sum.

Since $\mathbf p-\mathbf q$ has zero sum and
$\mathbf D$ is strictly conditionally negative definite for distinct support
points and $0<\beta<2$ \cite{schoenberg_theorem}, this quantity is
nonnegative and vanishes only when $\mathbf p=\mathbf q$.

\section{Method}
We describe our method for estimating causal estimands.

\subsection{Neural parametrisation of discrete models}\label{sec:parameterisation}

Recall from \Cref{sec:prelim} that \(\mathbf{X}=(X_1,\ldots,X_n)\) denotes the endogenous
variables and that \(X_i\) takes values in the finite state
space \(\mathcal{X}_i\).
Define $k_i = \vert \mathcal X_i\vert,$
and index the elements of $\mathcal X_i$ by $j\in\{0,\ldots,k_i-1\}$.

For each endogenous variable $X_i$, we introduce a neural network (NN) with parameters $\theta_i$.
Given $\mathbf x_{\pa{\mathbf X}(i)}$ and $
    \mathbf u_{\pa{\mathbf U}(i)}$,
the values of endogenous and exogenous parents of $X_i$, respectively, the neural network produces a categorical probability vector 
\begin{equation} \label{eq:pi_theta_param}\begin{split}
    \boldsymbol\pi_{i}&(\mathbf x_{\pa{\mathbf X}(i)},
    \mathbf u_{\pa{\mathbf U}(i)};\theta_i) = 
    \operatorname{softmax}\left(g_i(\mathbf x_{\pa{\mathbf X}(i)},
    \mathbf u_{\pa{\mathbf U}(i)}; \theta_i)\right) \in \Delta^{k-1},
\end{split}
\end{equation}
where $g_i(\cdot;\theta_i)$
is the NN corresponding to $X_i$ parametrised with $\theta_i$, and
a Softmax activation is applied to the output layer to ensure that the vector $\boldsymbol{\pi}$ lies in the probability simplex.
We write
\[
    \pi_{i,j}
    \left(
        \mathbf{x}_{\pa{\mathbf{X}}(i)},
        \mathbf{u}_{\pa{\mathbf{U}}(i)};
        \theta_i
    \right)
\]
for the probability of $X_i$ taking value $j$.
A separate network is used for each endogenous variable, and
we collect all parameters as
\(\theta=(\theta_1,\ldots,\theta_n)\).


Latent variables are modelled as i.i.d uniform random variables, 
\begin{align*}
    U_i \sim U(\mathbf{0}, \mathbf{1}). 
\end{align*}
This does not restrict our method to Uniform latent variables, since deep NNs can arbitrarily transform the latents to approximate the underlying latents \cite{nn_approximation}. 

The parametrised mechanisms together with the exogenous distribution, define a neural SEM \(M_\theta\).
Under this parametrisation, 
the observational and interventional distributions under $M_\theta$ can be computed as follows.

\paragraph{Observational distribution.}
For a subset $\mathbf V\subseteq \mathbf X$ of endogenous variables, let
\(\mathbf s_{k,\mathbf V}\) denote the values of \(\mathbf V\) in state \(\mathbf s_k\); see \Cref{eq:statespace}.
Conditional on a realisation \(\mathbf{u}\) of the exogenous
variables, the probability of \(\mathbf s_k\) under the model $M_\theta$ factorises according to
\begin{equation}
    \label{eq:conditional-state-probability}
    p_\theta(\mathbf s_k\mid\mathbf u)
    =
    \prod_{i=1}^{n}
    \pi_{i,\mathbf s_{k_i}}
    \left(
        \mathbf{s}_{k,\operatorname{Pa}_{\mathbf{X}}(i)},
        \mathbf{u}_{\operatorname{Pa}_{\mathbf{U}}(i)};
        \theta_i
    \right).
\end{equation}
Define the corresponding conditional probability vector
\begin{equation*}
    \mathbf{p}_\theta(\mathbf{u})
    =
    \left(
        p_\theta(\mathbf s_1\mid\mathbf{u}),
        \ldots,
        p_\theta(\mathbf s_K\mid\mathbf{u})
    \right)^\top
    \in\Delta^{K-1}.
\end{equation*}
Marginalising the exogenous variables gives the observational
distribution induced by \(M_\theta\):
\begin{equation}
    \label{eq:marginal-model-distribution}
    \mathbf{p}_\theta
    =
    \mathbb{E}_{\mathbf{U}}
    \left[
        \mathbf{p}_\theta(\mathbf{U})
    \right]
    =
     \left(
        p_\theta(\mathbf s_1),
        \ldots,
        p_\theta(\mathbf s_K)
    \right)^\top
    \in\Delta^{K-1}.
\end{equation}




This formulation gives a differentiable way to train our discrete model since \Cref{eq:conditional-state-probability} is a product of predicted probabilities and
for each exogenous realisation $\mathbf u$,
the complete vector
\(\mathbf{p}_\theta(\mathbf{u})\) is computed by enumerating
the finite endogenous state space.
We therefore do not sample
categorical endogenous variables and do not require a Gumbel-Softmax \cite{gumbel_softmax} approximation.
Furthermore, unlike policy-gradient, we do not need to stochastically sample over the categorical predictions.


The node networks also need not be evaluated separately for
every complete state.
Conditional independencies make computing the full vector $\mathbf p_\theta$ significantly faster. 
For each node \(X_i\), outputs can be cached
for each distinct configuration of its parents.
Consequently, the number of node-network evaluations per
exogenous draw is
\begin{equation*}
\sum_{i=1}^n\prod_{j\in\textrm{Pa}_\mathbf{X}(i)} \vert \mathcal X_j\vert,
\end{equation*}
where the product is one for a node without endogenous
parents.
For the ``straight line'' graph where nodes form a simple chain, the compute complexity is $\mathcal{O}(n k)$.
Further implementation details are provided in Appendix \ref{apx:implementation}

\paragraph{Interventional distribution.}
The same representation permits direct
calculation of interventional distributions. 
Consider an
intervention \(do(X_t=x)\). Conditional on
\(\mathbf{U}=\mathbf{u}\), a state \(\mathbf s_k\) is assigned zero
probability unless \(\mathbf s_{k_t}=x\), i.e., it is a compatible state. 
For compatible states, the
factor associated with the intervened variable is removed:
\begin{align*}
    &p_\theta
    \left(
        \mathbf s_k\mid do(X_t=x),\mathbf u
    \right)
     =
    \mathbf{1}\{\mathbf s_{k_t}=x\}
    \prod_{\substack{i=1\\i\neq t}}^{n}
    \pi_{i,\mathbf s_{k_i}}
    \left(
        \mathbf{s}_{k,\pa{\mathbf{X}}(i)},
        \mathbf{u}_{\pa{\mathbf{U}}(i)};
        \theta_i
    \right).
\end{align*}
This is the truncated factorization associated with an intervention, rather
than ordinary conditioning on \(X_t=x\).
Marginal interventional probabilities $p_\theta\left(\mathbf s_k\mid do(X_t=x)\right)$ are computed similarly to \Cref{eq:marginal-model-distribution}.
If \(Y=X_r\) is the outcome variable, then
\begin{equation*}
    \E_\theta
    \left[
        Y\mid do(X_t=x)
    \right]
    =
    \sum_{k=1}^{K}
    \mathbf s_{k_r}\,
    p_\theta
    \left(
        \mathbf s_k\mid do(X_t=x)
    \right),
\end{equation*}
where $\E_\theta$ denotes expectation under the model $M_\theta$ with slight abuse of notation.

\subsection{Empirical energy-distance loss}

Here, we explain how the Energy Distance (\Cref{eq:discrete_energy_distance}) is estimated as a loss function for our training procedure.
Let
\[
    \mathcal{D}_N
    =
    \left\{
        \mathbf{x}^{(1)},\ldots,\mathbf{x}^{(N)}
    \right\}
\]
be an observational dataset of \(N\) independent samples
from \(P_{\mathrm{obs}}\), with each
\(\mathbf{x}^{(\ell)}\in\mathcal{S}\). We represent its
empirical distribution by the probability vector
\[
    \widehat {\mathbf q}=(\hat q_1,\ldots,\hat q_K)^\top\in\Delta^{K-1},
\]
where
\begin{equation}
    \hat q_k
    =
    \frac{1}{N}
    \sum_{\ell=1}^{N}
    \mathbf{1}
    \left\{
        \mathbf{x}^{(\ell)}=\mathbf s_k
    \right\},
    \qquad k=1,\ldots,K.
    \label{eq:empirical-distribution}
\end{equation}
Thus, \(q\) is the empirical analogue of the observational
distribution \(P_{\mathrm{obs}}\) introduced in \Cref{sec:partial-identification}.
Further, for a neural network model $M_\theta$, the expectation in
\Cref{eq:marginal-model-distribution} is
approximated using \(B\) independent exogenous draws
\(\mathbf{u}^{(1)},\ldots,\mathbf{u}^{(B)}\):
\begin{equation*}
    \widehat{\mathbf{p}}_\theta
    =
    \frac{1}{B}
    \sum_{b=1}^{B}
    \mathbf{p}_\theta\big(\mathbf{u}^{(b)}\big).
\end{equation*}
Directly substituting $\widehat{\mathbf p_\theta}$
into the quadratic term $\mathbf p^\top\mathbf D\mathbf p$ in \Cref{eq:discrete_energy_distance} is biased because
the same Monte Carlo sample then appears in both factors \cite{energy_score},
introducing diagonal self-pair terms.
We instead use a differentiable
U-statistic.
An unbiased estimator of \(\mathbf p_\theta^\top\mathbf  D\mathbf p_\theta\) is
\begin{align*}
    \widehat{C}_\theta
    &=
    \frac{1}{B(B-1)}
    \sum_{\substack{b,b'=1\\b\neq b'}}^{B}
    \left(\mathbf p_\theta\big(\mathbf{u}^{(b)}\big)\right)^\top
    \mathbf D
    \mathbf p_\theta\big(\mathbf{u}^{(b')}\big) \\
    &=
    \frac{B}{B-1}
    \widehat{\mathbf p}_\theta^\top
    \mathbf D
    \widehat{\mathbf p}_\theta 
    -
    \frac{1}{B(B-1)}
    \sum_{b=1}^{B}
    \left(\mathbf p_\theta\big(\mathbf{u}^{(b)}\big)\right)^\top
    \mathbf D
    \mathbf p_\theta\big(\mathbf{u}^{(b)}\big).
\end{align*}
The empirical loss used for optimization is therefore
\begin{equation}
    \label{eq:empirical-energy-loss}
    \widehat{\mathcal L}_{\mathrm{ED}}(\theta;\beta):={\operatorname{ED}}_\beta(\widehat{\mathbf p}_\theta;\widehat{\mathbf q})
    =
    2\widehat{\mathbf p}_\theta^\top\mathbf D\widehat{\mathbf q}
    -
    \widehat{C}_\theta
    -
    \widehat{\mathbf q}^\top\mathbf D\widehat{\mathbf q}.
\end{equation}
Conditional on \(\widehat{\mathbf q}\), this is an unbiased estimator of
\(\widehat{\operatorname{ED}}_\beta(\theta;\widehat{\mathbf q})\). Unlike the
population Energy Distance, a finite-sample realization of
the U-statistic may be slightly negative. This does not
affect its differentiability or its use as an optimization
objective.
We use
\(\beta=1\) in the rest of the manuscript and suppress the $\beta$ parameter.

Every term in
Equation~\eqref{eq:empirical-energy-loss} is differentiable
with respect to \(\theta\), because the conditional
probability vectors \(\mathbf p_\theta (\mathbf u^{(b)})\) are computed directly
from products of neural-network outputs. Further details of
the estimator and its implementation are given in
Appendix~\ref{apx:implementation}.

\subsection{Penalised optimisation formulation}
Given the causal graph $\g$ and the empirical observational distribution
\(\widehat{\mathbf q}\) from \Cref{eq:empirical-distribution},
the observational equivalence class of \Cref{eq:equiv} within the neural network model is 
\begin{align*}
    \Theta(\g, \widehat {\mathbf q})
    &=
    \left\{
        \theta\in\Theta_\g:
        {\mathbf p}_\theta=\widehat {\mathbf q}
    \right\} =
    \left\{
        \theta\in\Theta_\g:
        {\mathcal{L}}_{\mathrm{ED}}
        (\theta;\widehat{\mathbf q})=0
    \right\}
\end{align*}
by \Cref{theorem:energy_dist}. 

For brevity, we write $a(\theta):=a(M_\theta)$.
The lower and upper causal bounds within the neural model
family are therefore
\begin{equation*}
    \underline a_{\Theta}
    =
    \min_{\theta\in\Theta(\g,\widehat{\mathbf q})}
    a(\theta),
    \qquad
    \overline a_{\Theta}
    =
    \max_{\theta\in\Theta(\g,\widehat{\mathbf q})}
    a(\theta).
\end{equation*}
These are neural-model, finite-sample analogues of the sharp
population bounds defined in \Cref{eq:bounds}.
%
%
%
%
%
Rather than solving these constrained optimisation problems directly, 
for a scalar penalty
parameter \(\lambda\in\mathbb{R}\), we consider the penalised optimisation problems
\begin{align} \label{eq:SoftOptim}
\theta_\lambda=
\arg\min_{\theta}J_\lambda(\theta),
\end{align}
where
\[J_\lambda(\theta):=
{\mathcal{L}}_{\mathrm{ED}}(\theta;\widehat{\mathbf{q}}) + \lambda a(\theta).\]
The sign of \(\lambda\) determines which extremum is
favoured. 
For \(\lambda>0\), minimising
Equation~\eqref{eq:SoftOptim} favours smaller
values of \(a(M_\theta)\) and therefore traces the lower-bound. For \(\lambda<0\), it favours larger values and
traces the upper-bound.
As \(|\lambda|\) decreases, observational fit receives
increasing priority relative to the causal estimand. In the
ideal case, the one-sided limits
\(
    \lambda\downarrow0\)
and
\(
    \lambda\uparrow0
\)
approach models attaining
\(\underline a_{\Theta}\) and
\(\overline a_{\Theta}\), respectively.

In practice, however, optimisation becomes unreliable when
\(|\lambda|\) is extremely small
(see \Cref{fig:full_lambda}). 
We therefore
do not take the estimate from the smallest available value of
\(|\lambda|\). Instead, we train at a series of stable nonzero
values of $\lambda$ and extrapolate the resulting causal estimates to
\(\lambda=0\), as explained below.

\subsection{Extrapolation}
Let \(
    I_+=(0,\varepsilon)\), and \(
    I_-=(-\varepsilon,0)
\)
for some \(\varepsilon>0\).
\begin{assumption}[smooth one-sided solution paths]\label{as:smooth}
For each \(s\in\{+,-\}\), there exists a map
\[
    \theta_s:I_s\rightarrow\Theta_G
\]
such that \(\theta_s(\lambda)\) is a local minimiser of
\(J_\lambda\) for every \(\lambda\in I_s\). Moreover, the
induced observational distribution
\(
    \mathbf p_{\theta_s(\lambda)}
\)
and causal estimand
\(
    a(\theta_s(\lambda))
\)
admit twice continuously differentiable one-sided
extensions to \(\lambda=0\), with
\(
     \mathbf p_{\theta_s(0)}=\widehat{\mathbf q}\).
\end{assumption}
\Cref{as:smooth} concerns the distributions and causal estimands
represented by the neural networks; it does not require the
raw parameter vectors \(\theta_s(\lambda)\) to converge.
Appendix \ref{app:kkt_perturbation} provides sufficient local KKT conditions under which \Cref{as:smooth} holds.

\begin{theorem}[Local extrapolation]
\label{thm:local-extrapolation}
Under \Cref{as:smooth}, for each \(s\in\{+,-\}\), there exist
constants \(c_s\in\mathbb{R}\) and \(d_s\geq0\) such that,
as \(\lambda\to0\) within \(I_s\),
\begin{align}
    &a(\theta_s(\lambda))
    =
    a(\theta_s(0))
    +
    c_s\lambda
    +
    O(\lambda^2),
    \label{eq:local-estimand-expansion}\\
    &\mathcal L_{\mathrm{ED}}
    \left(
        \theta_s(\lambda);
        \widehat{\mathbf q}
    \right)
    =
    d_s\lambda^2
    +
    O(|\lambda|^3).
\end{align}
\end{theorem}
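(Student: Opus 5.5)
The first expansion follows directly from Taylor's theorem; the second needs the quadratic structure of the energy distance from \Cref{theorem:energy_dist}. Fix \(s\in\{+,-\}\) and work with the one-sided \(C^2\) extensions on \(I_s\cup\{0\}\) given by \Cref{as:smooth}.

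\textbf{Causal estimand.} Since \(\lambda\mapsto a(\theta_s(\lambda))\) is twice continuously differentiable up to \(\lambda=0\), a one-sided Taylor expansion with Lagrange remainder gives
\[
a(\theta_s(\lambda))=a(\theta_s(0))+c_s\lambda+O(\lambda^2),
\qquad
c_s:=\tfrac{d}{d\lambda}a(\theta_s(\lambda))\big|_{\lambda=0^{s}}.
\]
The remainder is uniformly \(O(\lambda^2)\) because the second derivative is continuous, hence bounded, on a compact one-sided neighbourhood of \(0\).

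\textbf{Energy-distance loss.} Write \(\mathbf r(\lambda):=\mathbf p_{\theta_s(\lambda)}-\widehat{\mathbf q}\). By \Cref{theorem:energy_dist},
\[
\mathcal L_{\mathrm{ED}}(\theta_s(\lambda);\widehat{\mathbf q})=-\mathbf r(\lambda)^\top\mathbf D\,\mathbf r(\lambda).
\]
Since \(\mathbf p_{\theta_s(0)}=\widehat{\mathbf q}\), we have \(\mathbf r(0)=0\). The map \(\mathbf r\) is \(C^2\), so
\[
\mathbf r(\lambda)=\mathbf v_s\lambda+\mathbf w(\lambda),
\qquad
\mathbf v_s:=\mathbf r'(0^{s}),
\qquad
\|\mathbf w(\lambda)\|=O(\lambda^2).
\]
Substituting into the quadratic form yields
\[
-\mathbf r^\top\mathbf D\mathbf r
=-\lambda^2\,\mathbf v_s^\top\mathbf D\mathbf v_s-2\lambda\,\mathbf v_s^\top\mathbf D\mathbf w-\mathbf w^\top\mathbf D\mathbf w
=d_s\lambda^2+O(|\lambda|^3),
\qquad
d_s:=-\mathbf v_s^\top\mathbf D\mathbf v_s.
\]
The cross term is \(O(|\lambda|^3)\) and the last term is \(O(\lambda^4)\).

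\textbf{Sign of \(d_s\).} Every \(\mathbf r(\lambda)\) has zero coordinate sum, so \(\mathbf v_s\) does too. As noted after \Cref{theorem:energy_dist}, \(\mathbf D\) is conditionally negative definite on zero-sum vectors, so \(-\mathbf v_s^\top\mathbf D\mathbf v_s\ge 0\). Hence \(d_s\ge0\), with equality only if \(\mathbf v_s=0\).

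\textbf{Main obstacle.} The only delicate point is justifying that the one-sided derivatives and remainders are legitimate at the endpoint \(\lambda=0\), which is not in \(I_s\). I would handle this by applying the mean value form of Taylor's theorem on \([0,\lambda]\) (or \([\lambda,0]\)) to the \(C^2\) extension, which the assumption provides. No local-minimality or KKT information is needed for the statement as posed; that information would only be used to identify \(c_s\) and \(d_s\) more explicitly.
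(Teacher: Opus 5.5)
Your proposal is correct and follows the paper's proof step for step. A one-sided Taylor expansion of the \(C^2\) extension of \(a(\theta_s(\lambda))\) gives \(c_s\), and substituting \(\mathbf p_{\theta_s(\lambda)}-\widehat{\mathbf q}=\lambda\mathbf v_s+O(\lambda^2)\) into \(-(\mathbf p-\widehat{\mathbf q})^\top\mathbf D(\mathbf p-\widehat{\mathbf q})\) gives \(d_s=-\mathbf v_s^\top\mathbf D\mathbf v_s\ge 0\) via \(\mathbf 1^\top\mathbf v_s=0\) and conditional negative definiteness of \(\mathbf D\), with your handling of the cross term and endpoint remainder simply more explicit than the paper's.
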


\begin{proof}
Since \(a(\theta_s(\lambda))\) is twice continuously differentiable at zero,
its one-sided Taylor expansion gives
\[
    a(\theta_s(\lambda))
    =
    a(\theta_s(0))
    +
    a'(\theta_s(0))\lambda
    +
    O(\lambda^2),
\]
where $a'(\theta_s(0)):=\frac{d}{d\lambda}a(\theta_s(\lambda))\big|_{\lambda=0}$.
This proves
Equation~\eqref{eq:local-estimand-expansion} with
\(c_s=a'(\theta_s(0))\).

Similarly, let
\[
    \mathbf v_s
    =
    \frac{\mathrm{d}}{\mathrm{d}\lambda}\mathbf p_{\theta_s(\lambda)}\big|_{\lambda=0}.
\]
Using twice differentiability and \Cref{as:smooth}, 
\[
    \mathbf p_{\theta_s(\lambda)}
    -
    \widehat{\mathbf q}
    =
    \mathbf p_{\theta_s(\lambda)}
    -
    \mathbf p_{\theta_s(0)}
    =
    \lambda\mathbf v_s
    +
    O(\lambda^2).
\]
Substitution into the expression in
\Cref{theorem:energy_dist} gives
\begin{align*}
    \mathcal L_{\mathrm{ED}}
    \left(
        \theta_s(\lambda);
        \widehat{\mathbf q}
    \right)
    =
    -
    \left(
        \mathbf p_{\theta_s(\lambda)}
        -
        \widehat{\mathbf q}
    \right)^\top
    \mathbf D
    \left(
        \mathbf p_{\theta_s(\lambda)}
        -
        \widehat{\mathbf q}
    \right)=
    -
    \lambda^2
    \mathbf v_s^\top \mathbf D\mathbf v_s
    +
    O(|\lambda|^3).
\end{align*}
Because \(\mathbf p_{\theta_s(\lambda)}\) is a probability vector,
\(\mathbf 1^\top\mathbf v_s=0\). Conditional negative
definiteness of \(\mathbf D\) implies $\mathcal L_{\mathrm{ED}}$ is
nonnegative.
\end{proof}
A more detailed analysis is given in Appendix \ref{app:kkt_perturbation}. 
\begin{assumption}\label{as:global}
    For every sufficiently small \(\lambda\in I_s\),
\(\theta_s(\lambda)\) is a global minimiser of
\(
    J_\lambda(\theta)
    =
    \mathcal L_{\mathrm{ED}}(\theta;\widehat{\mathbf q})
    +
    \lambda a(\theta).
\)
\end{assumption}
\begin{corollary}\label{cor}
    Under Assumptions \ref{as:smooth} and \ref{as:global},
    \[
    a(\theta_{+}(0))
    =
    \underline a_\Theta(\widehat q),
    \qquad
    a(\theta_{-}(0))
    =
    \overline a_\Theta.
    \]
    Consequently, the intercepts in \Cref{eq:local-estimand-expansion} for $s=+$ and $s=-$
    equal the lower and
    upper neural causal bounds, respectively.
\end{corollary}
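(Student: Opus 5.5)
We prove the statement for $s=+$; the case $s=-$ is symmetric, obtained by replacing $a$ with $-a$ and $\lambda$ with $-\lambda$. The argument compares the global minimiser $\theta_+(\lambda)$ with an arbitrary feasible model. Fix any $\theta^\ast\in\Theta(\g,\widehat{\mathbf q})$, so that $\mathcal L_{\mathrm{ED}}(\theta^\ast;\widehat{\mathbf q})=0$. By \Cref{as:global}, for every small $\lambda\in I_+$ we have $J_\lambda(\theta_+(\lambda))\le J_\lambda(\theta^\ast)$, that is,
\[
\mathcal L_{\mathrm{ED}}(\theta_+(\lambda);\widehat{\mathbf q})+\lambda\, a(\theta_+(\lambda))\le \lambda\, a(\theta^\ast).
\]
The energy-distance term is nonnegative by \Cref{theorem:energy_dist} and the conditional negative definiteness of $\mathbf D$, so it can be dropped. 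Dividing by $\lambda>0$ gives $a(\theta_+(\lambda))\le a(\theta^\ast)$.

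Next we let $\lambda\downarrow0$. By \Cref{as:smooth}, $a(\theta_+(\lambda))\to a(\theta_+(0))$, the intercept in \Cref{eq:local-estimand-expansion}. Hence $a(\theta_+(0))\le a(\theta^\ast)$, and taking the infimum over $\theta^\ast$ gives $a(\theta_+(0))\le \underline a_\Theta$.

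For the reverse inequality we must show that the limit is attained by a feasible model, and this is the main obstacle. \Cref{as:smooth} gives $\mathbf p_{\theta_+(0)}=\widehat{\mathbf q}$, but it only asserts a one-sided extension of the induced quantities, not that a parameter $\theta_+(0)$ exists in $\Theta_\g$. I would interpret $\theta_+(0)$ as a point of $\Theta_\g$ whose induced distribution equals $\widehat{\mathbf q}$ and whose estimand equals the extended limit. Then $\theta_+(0)\in\Theta(\g,\widehat{\mathbf q})$, so $a(\theta_+(0))\ge\underline a_\Theta$. If no such attaining parameter exists, I would instead read the bound as an infimum or closure. In that case one uses $\mathcal L_{\mathrm{ED}}(\theta_+(\lambda))=O(\lambda^2)\to0$ from \Cref{thm:local-extrapolation}, so the models $\theta_+(\lambda)$ are asymptotically feasible. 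The claim then follows provided $\underline a_\Theta$ is defined over the closure of the feasible set, or $a$ is continuous in the induced distributions. I would state this interpretation explicitly.

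Combining the two inequalities gives $a(\theta_+(0))=\underline a_\Theta$. The analogous argument with $\lambda<0$ reverses the inequality after division and gives $a(\theta_-(0))=\overline a_\Theta$. The final sentence of the corollary then follows by matching these identities with the intercepts in \Cref{eq:local-estimand-expansion}.
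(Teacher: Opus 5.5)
Your proposal is correct and is essentially the paper's own proof. You compare the global minimiser $\theta_+(\lambda)$ with an arbitrary exact-fit $\theta^\star$, drop the nonnegative energy-distance term, divide by $\lambda>0$, let $\lambda\downarrow0$, and get the reverse inequality by reading $\mathbf p_{\theta_+(0)}=\widehat{\mathbf q}$ as saying that $\theta_+(0)$ is itself an exact-fit model. The paper leaves that last step implicit, and you are right to flag it.

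One small caveat on your fallback: $a$ is not a function of the observational distribution alone (that is the whole point of partial identification), so ``continuity of $a$ in the induced distributions'' is not a usable condition. What does work is defining $\underline a_\Theta$ over asymptotically feasible models, meaning limits of $(\mathbf p_\theta,a(\theta))$ with $\mathbf p_\theta\to\widehat{\mathbf q}$; under that definition both inequalities go through.
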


\begin{proof}
Let \(\theta^\star\) be any parameter such that $\mathbf p_{\theta^\star}=\widehat{\mathbf q}$, so that
\(\mathcal L_{\mathrm{ED}}(\theta^\star;\widehat{\mathbf q} )=0\). 
For
\(\lambda>0\), global optimality gives
\[
    \mathcal L_{\mathrm{ED}}(\theta_+(\lambda);\widehat q)
    +
    \lambda a(\theta_+(\lambda))
    \leq
    \lambda a(\theta^\star).
\]
Since the energy-distance loss is nonnegative,
\[
    a(\theta_+(\lambda))\leq a(\theta^\star).
\]
Taking \(\lambda\downarrow0\) and using
\(\mathbf p_{\theta_+(0)}=\widehat{\mathbf q}\) shows that \(a(\theta_+(0))\) is no larger
than the estimand of any exact-fit model $\theta^\star$. 
Hence
\(a(\theta_+(0))=\underline a_\Theta\). The argument for
\(\lambda<0\) is analogous, with the inequality reversed.
\end{proof}
Note that the branch derivatives $\mathbf v_s$ may differ for positive and
negative $\lambda$. Finally, the Energy Distance is well defined throughout
the simplex, unlike other (semi-)distances such as the Kullback--Leibler
divergence, which becomes unbounded at the boundary and exhibits steep
gradients and Hessians for small probabilities, potentially leading to more
poorly conditioned fits.

\subsection{Training and bound estimation}
Motivated by \Cref{thm:local-extrapolation} and \Cref{cor}, we train models at
several positive and negative values of \(\lambda\) and
extrapolate their causal estimates to \(\lambda=0\).
In
practice, we minimise the objective
\begin{equation} \label{eq:training_loss}
    \widehat J_\lambda(\theta)
    =
    \widehat {\mathcal{L}}_{\mathrm{ED}}
    (\theta;\widehat{\mathbf q})
    +
    \lambda a(\theta).
\end{equation}
A series of different models with different positive and negative $\lambda$ values are trained. 
We then regress the causal estimates on the $\lambda$
values on the positive and negative branches.
The intercept of the positive branch regression estimates the lower bound, while the one of the
negative branch estimates the upper bound. \emph{We do not solve the KKT conditions, rather we solve \Cref{eq:training_loss} using SGD}. 

The regression should use penalties within the local regime
described by \Cref{thm:local-extrapolation}. However, this analysis does not consider if the optima can be found, only how the local region around them behaves. Empirically, using SGD on NNs we observe that convergence near $\lambda=0$ is poor, but good convergence for small $\lambda$. Hence, we have a range of valid $\lambda$ between the convergence and locally linear regime. 

We select the regression points using the studentised-residual
procedure given in \cref{alg:linear_ate_bound} in the Appendix.
The method begins with the largest penalty magnitudes and
adds points closer to zero while they remain consistent with
the fitted linear trend.
The accompanying energy-distance values provide an
additional diagnostic. 

For all our experiments, we used 4 equally spaced points for positive and negative $\lambda$, $|\lambda| \in \{0.005,\allowbreak 0.010,\allowbreak 0.015,\allowbreak 0.020\}$. We found this range to be a good trade off between accuracy and training time, and is well within the linear regime. Further training details are given in Appendix \ref{apx:implementation}.



\section{Simulation Studies}
\begin{figure*}[t]
\centering
    \includegraphics[width=0.5\textwidth]{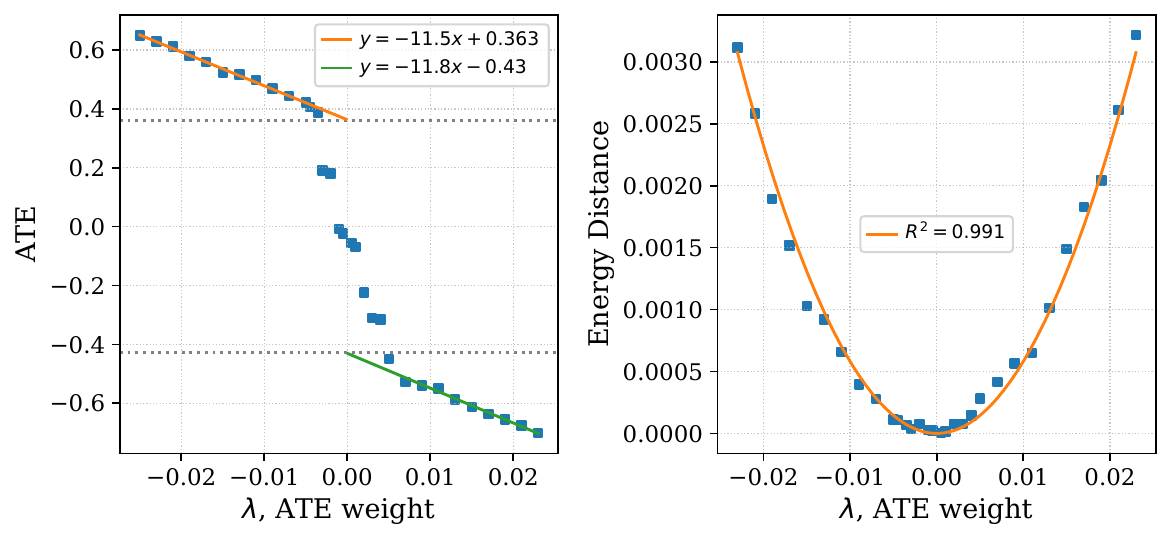}
    \includegraphics[width=0.5\textwidth]{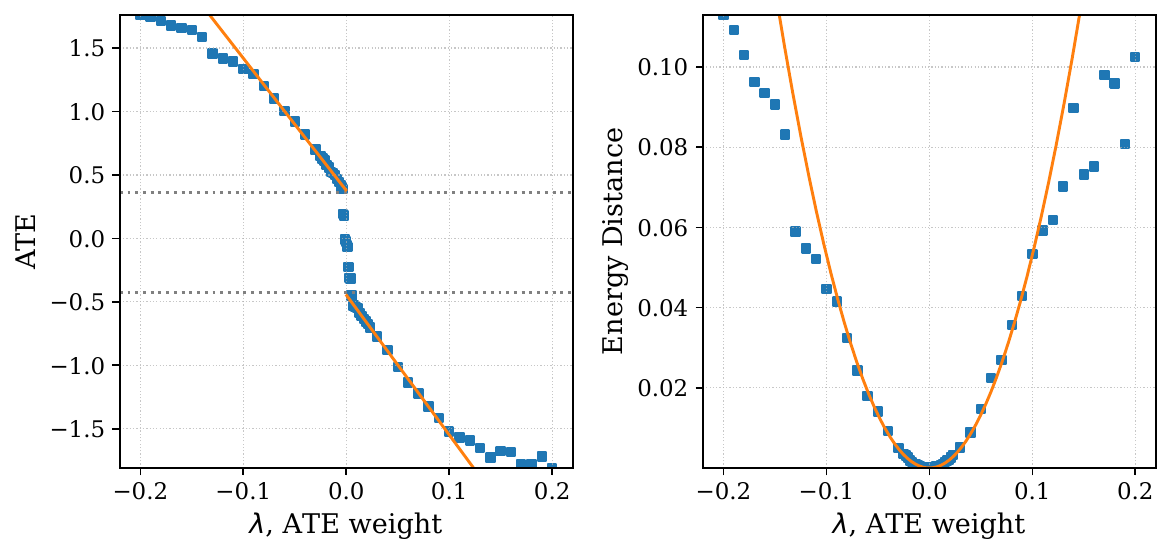}
    \caption{Model-predicted ATE and Energy Distance as $\lambda$ varies. Left: values in the extrapolation regime, where fitted ATE intercepts recover the predicted bounds and Energy Distance follows a quadratic trend. The linear fit and $R^2$ goodness-of-fit values are shown for the plots. Right: values across a wider range of $\lambda$ beyond the linear and quadratic regimes.}
    \label{fig:vary_lambda}
    \label{fig:full_lambda}
\end{figure*}

\newcommand{\result}[2]{%
  \(#1\text{\scriptsize\(\,\pm\,#2\)}\)%
}

\begin{table*}
\centering
\small
\setlength{\tabcolsep}{3pt}
\resizebox{\linewidth}{!}{%
\begin{tabular}{lcccccccc}
\hline
Dataset & \multicolumn{2}{c}{Autobounds} & \multicolumn{2}{c}{BCCE} & \multicolumn{2}{c}{NeuralPartialID} & \multicolumn{2}{c}{Ours} \\
\cline{2-9}
            & Upper & Lower & Upper & Lower & Upper & Lower & Upper & Lower \\
\hline
IV-Binary   & 0 & 0  &\result{0.135}{0.012} & \result{0.087}{0.021} & \result{0.105}{0.026} & \result{0.159}{0.041} & \result{0.020}{0.005} & \result{0.011}{0.004} \\
IV-Ternary  & 0 & 0 & \result{0.154}{0.043} & \result{0.125}{0.018} & \result{0.191}{0.065} & \result{0.162}{0.047} & \result{0.018}{0.011} & \result{0.010}{0.002} \\
Proxy       & \result{0.033}{0.011}* & \result{0.020}{0.006}* & \result{0.093}{0.027} & \result{0.107}{0.029} & \result{0.067}{0.020} & \result{0.077}{0.032} & \result{0.032}{0.012} & \result{0.026}{0.009} \\
Mediated    & \result{0.136}{0.038}* & \result{0.171}{0.041}* & \result{0.056}{0.020} & \result{0.037}{0.016} & \result{0.128}{0.053} &\result{0.099}{0.036} & \result{0.020}{0.009} & \result{0.021}{0.010} \\

\hline
\end{tabular}
}
\caption{Absolute error $\downarrow$ of the predicted upper and lower bounds for the ATE, compared with the true bounds. Standard errors are recorded over 10 datasets. (*) Autobounds does not fully converge and gives looser bounds than the true bound.}
\label{tab:main_results}
\end{table*}

\begin{figure}[t]
\centering
    \includegraphics[width=0.5\linewidth]{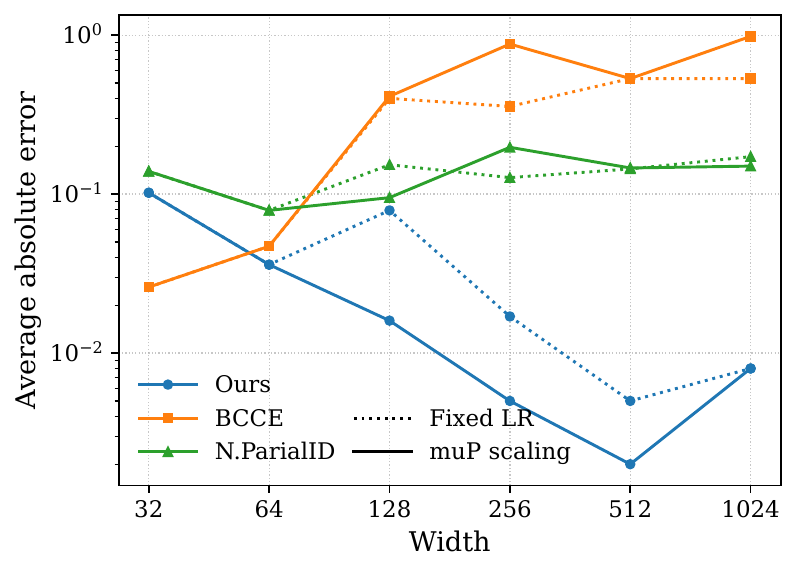}
    \caption{Absolute error between the analytical and predicted ATE bounds for our model, BCCE, and NeuralPartialID, averaged over the upper and lower bounds, over different neural network widths. Solid lines indicate a constant learning rate, while dashed lines indicate $\mu$P scaling.}
    \label{fig:width_ablation}
\end{figure}

We evaluate our method against two neural baselines: NeuralPartialID \cite{NeuralPartlID} and Bounding Confounded Causal Effects (BCCE) \cite{BoundedCausalEffects}, as well as Autobounds \cite{Autobounds}. 

Synthetic data is generated with four tasks derived from three causal graph structures: binary Instrumental Variable (IV), ternary IV, binary IV with a proxy variable (Proxy), and binary IV with a mediator (Mediated), shown in Figure \ref{fig:causal_graphs}.
For each task,
observational datasets are generated by randomly sampling from discrete SEMs whose mechanisms are defined by random Fourier functions.
Details are given in Appendix \ref{app:data_generation}. For the IV and Proxy graphs, the ground truths are derived from the CausalOptim package \cite{causaloptim_pkg}. However, CausalOptim and Autobounds are not able to handle the Mediated graph. For this type of graph, we derive the ground truth bounds in Appendix \ref{app:mediated_bounds}.

This procedure gives a range of datasets that are guaranteed to satisfy the graphical constraints, together with corresponding ground-truth ATE bounds, without requiring explicit identification of the extremal functions.  Each model was evaluated on 10 datasets for each graph type.
Performance is measured by the absolute error in the estimated lower and upper ATE
bounds, averaged across datasets.

\subsection{Bounding accuracy}

\Cref{tab:main_results} gives the average absolute error in the ATE upper and lower bounds. 
Autobounds gives exact bounds for the IV Binary and IV Ternary problems, but does not fully converge for the Proxy graph, where the bounds are looser than the analytical bounds. Our method consistently gives much more accurate bounds than BCCE and NeuralPartialID on all graph types. On the Proxy graph, our method is comparable to Autobounds, while on the Mediated graph, our method significantly outperforms Autobounds.

\subsection{Runtime}
We measure the wall-clock runtime of the neural methods on a fixed ternary IV dataset, with identical hardware (Appendix~\ref{app:hardware}). Our method takes 330 seconds, while BCCE takes 84 seconds and NeuralPartialID takes 399 seconds. 
Our method is $20\%$ faster than NeuralPartialID, though slower than BCCE.
These runtimes should be interpreted together with the
accuracy results, since the methods optimise different objectives and exhibit substantially different errors.
\subsection{Behaviour of our approach}
In order to visualise our method, on a fixed ternary IV dataset, we plot the ATE and $\mathcal{L}_\textrm{ED}$ for multiple values of $\lambda$ in \Cref{fig:vary_lambda}. The shape of both the ATE and Energy Distance curves agrees with our theory. A linear extrapolation for the ATE gives the correct bounds, and the Energy Distance is well modelled by a quadratic curve. However, the model does not converge to the extremal bounds near $\lambda=0$. Despite extensive tuning, this gap remains and necessitates our extrapolation method. 
The true ATE interval for this example is
\(
    [-0.427,\,0.361].
\)
Linear extrapolation gives
\(
    [-0.430,\,0.363],
\)
which closely matches the analytical bounds. In contrast,
using the trained models directly without extrapolation gives
approximately
\(
    [-0.450,\,0.388].
\)
Thus, the main improvement comes from estimating the
zero-penalty intercept rather than taking the causal
estimand from the model trained at the smallest available
penalty.
\Cref{fig:full_lambda} plots the ATE and $\mathcal{L}_\textrm{ED}$ for a wider range of $\lambda$. The ATE is linear in $\lambda$ for around $0.005<|\lambda|<0.09$. At $\lambda=0.09$, there is an abrupt change where both the ATE and energy distance stop following the linear and quadratic trend. Both changes are consistent with our theory; this corresponds to when a nodes predicted probability equals 0 or 1. There is a wide range of $\lambda$ within the linear regime, with our selected $\lambda$ inside it.

\subsection{Analysis of the baselines}  \label{section:baseline}
To investigate why the neural baselines underperform, we test each model using neural networks of varying widths. All other parameters are held constant, except for the learning rate, which is either scaled using $\mu$P scaling \cite{mup_scaling} or kept fixed. $\mu$P scaling promotes stable training across different widths by ensuring that the magnitudes of parameter updates and gradients remain stable. All models are trained on a fixed ternary IV SEM.  \Cref{fig:width_ablation} shows the average absolute error between the predicted and analytical bounds.

Ideally, improving the width and hence representational capacity of the model improves performance. However, this is not observed for NeuralPartialID and BCCE, the accuracy decreases for larger NN widths. For BCCE, we attribute this to an instability in the GAN training procedure where the critic becomes too strong, preventing the generator from learning. Similarly, for NeuralPartialID, this is attributed to unstable parameter updates and hyperparameter sensitivity. Further analysis can be seen in Appendix \ref{app:baseline_eval}. 

\section{Conclusion}

We introduced a neural approach to partial identification in finite discrete structural causal models. 
The method represents each
structural mechanism by a neural categorical model and
computes the induced observational and interventional
distributions explicitly over the finite state space. Causal
bounds are estimated by solving a signed family of
energy-distance-penalised problems and extrapolating the
resulting causal estimates to the zero-penalty limit.
Our analysis shows that, along a smooth one-sided solution
path, the causal estimand varies linearly with the penalty
parameter to first order, while the energy-distance loss
varies quadratically.
The
corresponding intercepts recover the lower and upper bounds
within the neural model family.
Across several simulations, our resulting extrapolated bounds are consistently more accurate than those of the evaluated neural baselines and remain competitive with analytical methods.

The approach is limited to discrete models and requires identifying a locally regular penalty range. Extending the approach to continuous variables and further development on the theory behind the validity of the linear extrapolation regime would broaden its practical scope. More generally, these results suggest that optimisation paths, rather than only their limiting iterates, can provide useful information for estimating causal bounds in partially identified models.

\clearpage
\nocite{Autobounds,causaloptim_pkg,NeuralPartlID,BoundedCausalEffects,RMSProp,schoenberg_theorem,balke1997bounds,gan_instablitiy,balazadeh2022partial,sinkhorn_loss,energy_distance,relu,adam,pytorch}
\bibliographystyle{plainnat}
\bibliography{aaai2027}

@inproceedings{NeuralPartlID,
  author  = {J. Tan and J. Blanchet and V. Syrgkanis},
  year    = {2024},
  title   = "{Consistency of Neural Causal Partial Identification}",
  booktitle = "{Advances in Neural Processing Systems 2024}",
}

@inproceedings{BoundedCausalEffects,
  title={A Generative Adversarial Framework for Bounding Confounded Causal Effects},
  author={Yaowei Hu and Yongkai Wu and Lu Zhang and Xintao Wu},
  booktitle={AAAI Conference on Artificial Intelligence},
  year={2021},
}

@article{Autobounds,
author = {Guilherme Duarte and Noam Finkelstein and Dean Knox and Jonathan Mummolo and Ilya Shpitser},
title = {An Automated Approach to Causal Inference in Discrete Settings},
journal = {Journal of the American Statistical Association},
volume = {119},
number = {547},
pages = {1778--1793},
year = {2024},
}

@Manual{causaloptim_pkg,
  title = {causaloptim: An Interface to Specify Causal Graphs and Compute Bounds on Causal Effects},
  author = {Michael C Sachs and Erin E Gabriel and Arvid Sjölander and Gustav Jonzon},
  year = {2026},
  url = {https://sachsmc.github.io/causaloptim/},
}

@article{sachs2023general,
  title={A general method for deriving tight symbolic bounds on causal effects},
  author={Sachs, Michael C and Jonzon, Gustav and Sj{\"o}lander, Arvid and Gabriel, Erin E},
  journal={Journal of Computational and Graphical Statistics},
  volume={32},
  number={2},
  pages={567--576},
  year={2023},
  publisher={Taylor \& Francis}
}

@inproceedings{mup_scaling,
  title={Tensor Programs V: Tuning Large Neural Networks via Zero-Shot Hyperparameter Transfer}, 
  author={Greg Yang and Edward J. Hu and Igor Babuschkin and Szymon Sidor and Xiaodong Liu and David Farhi and Nick Ryder and Jakub Pachocki and Weizhu Chen and Jianfeng Gao},
  year={2022},
  booktitle = "{Advances in Neural Processing Systems 2021}",
}

@inproceedings{gan_instablitiy,
author = {Salimans, Tim and Goodfellow, Ian and Zaremba, Wojciech and Cheung, Vicki and Radford, Alec and Chen, Xi},
title = {Improved techniques for training GANs},
year = {2016},
booktitle = {Proceedings of the 30th International Conference on Neural Information Processing Systems},
}

@book{pearl_2009,
author = {Judea Pearl},
year = {2009},
title = {Causality},
publisher = {Cambridge University Press}
}

@book{manski2003partial,
  title={Partial identification of probability distributions},
  author={Manski, Charles F},
  year={2003},
  publisher={Springer}
}

@article{balazadeh2022partial,
  title={Partial identification of treatment effects with implicit generative models},
  author={Balazadeh Meresht, Vahid and Syrgkanis, Vasilis and Krishnan, Rahul G},
  journal={Advances in Neural Information Processing Systems},
  volume={35},
  pages={22816--22829},
  year={2022}
}

@article{balke1997bounds,
  title={Bounds on treatment effects from studies with imperfect compliance},
  author={Balke, Alexander and Pearl, Judea},
  journal={Journal of the American statistical Association},
  volume={92},
  number={439},
  pages={1171--1176},
  year={1997},
  publisher={Taylor \& Francis}
}

@article{shridharan2023scalable,
  title={Scalable computation of causal bounds},
  author={Shridharan, Madhumitha and Iyengar, Garud},
  journal={Journal of Machine Learning Research},
  volume={24},
  number={237},
  pages={1--35},
  year={2023}
}

@article{xia2022neural,
  title={Neural causal models for counterfactual identification and estimation},
  author={Xia, Kevin and Pan, Yushu and Bareinboim, Elias},
  journal={arXiv preprint arXiv:2210.00035},
  year={2022}
}

@inproceedings{padh2023stochastic,
  title={Stochastic causal programming for bounding treatment effects},
  author={Padh, Kirtan and Zeitler, Jakob and Watson, David and Kusner, Matt and Silva, Ricardo and Kilbertus, Niki},
  booktitle={Conference on Causal Learning and Reasoning},
  pages={142--176},
  year={2023},
  organization={PMLR}
}

@inproceedings{gumbel_softmax,
title={Categorical Reparameterization with Gumbel-Softmax},
author={Eric Jang and Shixiang Gu and Ben Poole},
booktitle={International Conference on Learning Representations},
year={2017},
}

@article{energy_score,
author = {Tilmann Gneiting and Adrian E Raftery},
title = {Strictly Proper Scoring Rules, Prediction, and Estimation},
journal = {Journal of the American Statistical Association},
year = {2007},
}

@misc{nn_approximation,
      title={Optimal approximation of continuous functions by very deep ReLU networks}, 
      author={Dmitry Yarotsky},
      year={2018},
      archivePrefix={arXiv},
}

@inproceedings{sinkhorn_loss,
author = {Cuturi, Marco},
title = {Sinkhorn distances: lightspeed computation of optimal transport},
year = {2013},
booktitle = {Proceedings of the 27th International Conference on Neural Information Processing Systems - Volume 2},
series = {NIPS'13}
}

@misc{schoenberg_theorem,
    title={METRIC SPACES AND POSITIVE DEFINITE FUNCTIONS},
    author={1. J. Schoenberg},
    year={1938}
    }

@misc{relu,
author = {Nair, Vinod and Hinton, Geoffrey E.},
title = {Rectified linear units improve restricted boltzmann machines},
year = {2010},
}

@misc{adam,
      title={Adam: A Method for Stochastic Optimization}, 
      author={Diederik P. Kingma and Jimmy Ba},
      booktitle={International Conference on Learning Representations},
      year={2015},
}

@misc{RMSProp,
    title={RMSProp: Divide the gradient by a running average of its recent magnitude},
    author={Hinton},
    year={2012}
}

@article{energy_distance,
author = {Rizzo, Maria L. and Székely, Gábor J.},
title = {Energy distance},
journal = {WIREs Computational Statistics},
volume = {8},
number = {1},
pages = {27-38},
doi = {https://doi.org/10.1002/wics.1375},
year = {2016}
}

@inproceedings{causal_neural_connection,
author = {Xia, Kevin and Lee, Kai-Zhan and Bengio, Yoshua and Bareinboim, Elias},
title = {The causal-neural connection: expressiveness, learnability, and inference},
year = {2021},
booktitle = {Proceedings of the 35th International Conference on Neural Information Processing Systems},
}

@InProceedings{zhang_partial,
  title = 	 {Partial Counterfactual Identification from Observational and Experimental Data},
  author =       {Zhang, Junzhe and Tian, Jin and Bareinboim, Elias},
  booktitle = 	 {Proceedings of the 39th International Conference on Machine Learning},
  year = 	 {2022},
}

@article{energy_szekely,
author = {Gabor J. Szekely},
title = {E-statistics: The Energy of Statistical Samples},
journal = {Bowling Green State University,
Department of Mathematics and Statistics Technical Report},
year = {2003}
}

@misc{categorical_iv,
      title={The Categorical Instrumental Variable Model: Characterization, Partial Identification, and Statistical Inference}, 
      author={Yilin Song and F. Richard Guo and K. C. Gary Chan and Thomas S. Richardson},
      year={2025},
}

@misc{pytorch,
      title={PyTorch: An Imperative Style, High-Performance Deep Learning Library}, 
      author={Adam Paszke and Sam Gross and Francisco Massa and Adam Lerer and James Bradbury and Gregory Chanan and Trevor Killeen and Zeming Lin and Natalia Gimelshein and Luca Antiga and Alban Desmaison and Andreas Köpf and Edward Yang and Zach DeVito and Martin Raison and Alykhan Tejani and Sasank Chilamkurthy and Benoit Steiner and Lu Fang and Junjie Bai and Soumith Chintala},
      year={2019},
}

@misc{gan,
      title={Generative Adversarial Networks}, 
      author={Ian J. Goodfellow and Jean Pouget-Abadie and Mehdi Mirza and Bing Xu and David Warde-Farley and Sherjil Ozair and Aaron Courville and Yoshua Bengio},
      year={2014},
      booktitle={Proceedings of the International Conference on Neural Information Processing Systems}
}

@InProceedings{boushehrian26fundamental,
  title     = {Fundamental Limits and Optimal Methods for Sharp Analytical Causal Bounds in Instrumental Variable Models},
  author    = {Boushehrian, Arefe and Badri, Mohammad Reza and Akbari, Sina and Kiyavash, Negar},
  booktitle = {Proceedings of the 42nd Conference on Uncertainty in Artificial Intelligence},
  pages     = {689--744},
  year      = {2026},
  volume    = {337},
  series    = {Proceedings of Machine Learning Research},
  publisher = {PMLR}
}

\clearpage
\appendix
\begin{algorithm}[tb]
\caption{Lower (Upper) ATE bound estimation via studentized-residual fit}
\label{alg:linear_ate_bound}
\textbf{Input}: Data vectors $\mathbf{x}, \mathbf{y} \in \mathbb{R}^{L}$; minimum points $k_{\min}$; threshold $\tau$\\
\textbf{Output}: Intercept estimate $\hat{\beta}_0$
\begin{algorithmic}[1]
\STATE Sort $\{(\mathbf{x}_i, \mathbf{y}_i)\}_{i=1}^{L}$ by $\mathbf{x}_i$ descending (ascending), yielding $(\tilde{\mathbf{x}}, \tilde{\mathbf{y}})$.
\STATE $S \gets \{0, \dots, k_{\min}-1\}$ \hfill // initial set
\FOR{$i = k_{\min}, \dots, L-1$}
    \STATE $T \gets S \cup \{i\}$ \hfill // tentatively add next point
    \STATE Fit OLS: $\mathbf{y}_T \sim \beta_0 + \beta_1 \, \tilde{\mathbf{x}}_T$
    \STATE Compute internally studentised residuals $\mathbf{r}$.
    \IF{$|\mathbf{r}_{-1}| < \tau$}
        \STATE $S \gets T$ \hfill // accept
    \ELSE
        \STATE \textbf{break} \hfill // reject and stop
    \ENDIF
\ENDFOR
\STATE Fit $\tilde{\mathbf{y}}_{S} \sim \beta_0 + \beta_1 \, \tilde{\mathbf{x}}_{S}$ on final $S$.
\STATE \textbf{return} $\hat{\beta}_0$
\end{algorithmic}
\end{algorithm}
\section{Implementation Details} \label{apx:implementation}

Our model uses the assumed causal DAG to construct a neural structural causal
model. As in the data-generating process, the nodes are evaluated in
topological order. However, the unknown structural mechanisms are represented
by neural networks rather than fixed random functions.

Let $X_1,\ldots,X_m$ denote the discrete endogenous nodes and let
$\mathbf U$ denote the exogenous variables. For every node $X_i$, the model
specifies the categorical probabilities
\begin{equation*}
    \pi_{i,j}\bigl(\textrm{Pa}_X(i),\textrm{Pa}_U(i);\theta_i\bigr),
    \qquad j\in\{0,\ldots,k_i-1\},
\end{equation*}
where $\textrm{Pa}_X(i)$ and $\textrm{Pa}_U(i)$ are the endogenous and
exogenous parents of $X_i$, respectively. The exogenous variables are sampled
as in the main model, $\mathbf U\sim U(\mathbf{0},\mathbf{1})$.

For a node with $k_i$ possible states, the values of its parents are
concatenated into the input
\begin{equation*}
    \mathbf h_i^{(0)}
    =
    \operatorname{concat}\left(
        \textrm{Pa}_X(i),
        \textrm{Pa}_U(i)
    \right).
\end{equation*}
This input is passed through a multilayer perceptron,
\begin{align*}
    \mathbf h_i^{(\ell)}
    &=
    \rho\left(
        W_i^{(\ell)}\mathbf h_i^{(\ell-1)}
        +\mathbf b_i^{(\ell)}
    \right),
    \qquad \ell=1,\ldots,L,\\
    \mathbf a_i
    &=
    W_i^{(\mathrm{out})}\mathbf h_i^{(L)}
    +\mathbf b_i^{(\mathrm{out})},
\end{align*}
where \(\rho\) is the ReLU activation function and
\(\mathbf a_i\in\mathbb R^{k_i}\) contains the output logits. The conditional
probability of state $j$ is obtained using a softmax:
\begin{equation*}
    \pi_{i,j}\bigl(\textrm{Pa}_X(i),\textrm{Pa}_U(i);\theta_i\bigr)
    =
    \frac{\exp(a_{i,j})}
         {\sum_{\ell=0}^{k_i-1}\exp(a_{i,\ell})}.
\end{equation*}
A separate neural network with parameters $\theta_i$ is used for each node.

\subsection{Joint probability calculation}

The model computes probabilities explicitly over all possible joint
configurations of the discrete endogenous nodes. Let
\begin{equation*}
    \mathcal S
    =
    \prod_{i=1}^m
    \{0,\ldots,k_i-1\}
\end{equation*}
be the Cartesian product of their state spaces. The number of configurations
is therefore
\begin{equation*}
    K
    =
    \prod_{i=1}^m k_i.
\end{equation*}
We denote the $c$-th configuration by $\mathbf{s}_c\in\mathcal S$.

For a given realisation $\mathbf{u}$ of the exogenous variables, the
probability of a configuration factorizes according to the DAG:
\begin{equation}
    p_{\theta}\left(\mathbf{s}_c\mid\mathbf{u}\right)
    =
    \prod_{i=1}^m
    \pi_{i,s_{c,i}}\bigl(\textrm{Pa}_{\mathbf{s}_c}(i),
    \textrm{Pa}_{\mathbf{u}}(i);\theta_i\bigr).
    \label{eq:joint-neural-factorization}
\end{equation}
Thus, each network returns a probability vector over the states of its node,
and the entry corresponding to $s_{c,i}$ is selected for each global
configuration. The selected conditional probabilities are then multiplied
across nodes.

To avoid unnecessary network evaluations, the model identifies the unique
configurations of a node's discrete parents. The node network is evaluated
once for each unique parent configuration rather than once for every complete
configuration in $\mathcal S$. Its outputs are then mapped back to the
corresponding complete configurations. This optimization does not change the
probability represented by Equation~\eqref{eq:joint-neural-factorization}.

\subsection{Marginal probabilities}

The probabilities in Equation~\eqref{eq:joint-neural-factorization} are
conditional on a realisation of the exogenous variables. The model
distribution is obtained by integrating over $\mathbf U$:
\begin{equation*}
    p_{\theta}(\mathbf{s})
    =
    \mathbb E_{\mathbf U}
    \left[
        p_{\theta}\left(\mathbf{s}\mid\mathbf U\right)
    \right].
\end{equation*}
This expectation is approximated by Monte Carlo sampling. Given $B$
independent exogenous realisations,
\begin{equation*}
    \widehat p_{\theta}(\mathbf{s})
    =
    \frac{1}{B}
    \sum_{b=1}^{B}
    p_{\theta}\left(\mathbf{s}\mid\mathbf u^{(b)}\right).
\end{equation*}

\subsection{Interventional probabilities}

Interventions are implemented by replacing the model assignment for the
intervened node. Under the intervention
\(\operatorname{do}(X_t=a)\), configurations for which \(s_{c,t}\neq a\)
are assigned probability zero. For compatible configurations, the conditional
factor associated with \(X_t\) is omitted:
\begin{align*}
    p_{\theta}\left(
        \mathbf{s}_c\mid\operatorname{do}(X_t=a),\mathbf{u}
    \right) \notag\\
    =   
    \mathbf{1}\{s_{c,t}=a\}
    \prod_{i\in\{1,\ldots,m\}\setminus\{t\}}
    \pi_{i,s_{c,i}}\bigl(\textrm{Pa}_{\mathbf{s}_c}(i),
    \textrm{Pa}_{\mathbf{u}}(i);\theta_i\bigr).
\end{align*}
This is the truncated factorization associated with an intervention, rather
than ordinary conditioning on \(X_t=a\).

For an outcome node \(Y\), its expected value under treatment \(a\) is
computed directly from the probabilities of the enumerated states:
\
\begin{align*}
    \mathbb E_{\theta}
    \left[Y\mid\operatorname{do}(X_t=a)\right] \notag\\
    =
    \mathbb E_{\mathbf U}
    \left[
        \sum_{c=1}^{K}
        s_{c,Y}
        p_{\theta}\left(
            \mathbf{s}_c\mid\operatorname{do}(X_t=a),\mathbf U
        \right)
    \right].
\end{align*}
The causal contrast used in the main objective is consequently
\begin{align*}
    a(\theta)
    = 
    \mathbb E_{\theta}
    \left[Y\mid\operatorname{do}(X_t=1)\right]
    -
    \mathbb E_{\theta}
    \left[Y\mid\operatorname{do}(X_t=0)\right].
\end{align*}

\subsection{Estimating Energy Distance} \label{app:energy_dist}
\Cref{theorem:energy_dist} gives the Energy Distance for a random discrete model. However, when estimating the Energy Distance from samples from finite samples, a modified formula is used \cite{energy_distance}. Specifically, the term involving the distance between predictions, $\mathbb{E}||x-x'||$ is modified. The term can be estimated with the unbiased U estimator 
\begin{align*}
    \mathbb{E}||x-x'|| \approx \frac{1}{B^2} \sum_{i \neq j} \mathbf{p}^{i\top}_\theta \mathbf{D} \mathbf{p}^j_\theta \\
    = \frac{B}{B-1} \tilde{\mathbf{p}}_\theta \mathbf{D} \tilde{\mathbf{p}}_\theta - \frac{1}{B(B-1)}\sum_{i=1}^B \mathbf{p}^{i\top}_\theta \mathbf{D} \mathbf{p}&i_\theta
\end{align*}
where $\tilde{\mathbf{p}}$ is the mean predicted probability, or the biased V estimator
\begin{align*}
    \mathbb{E}||x-x'|| \approx \tilde{\mathbf{p}} \mathbf{D} \tilde{\mathbf{p}}.
\end{align*}
Using the V estimator means the Energy Distance is always positive, but biased positively. Alternatively, the U estimator gives an unbiased Energy Distance, but possibly negative empirical value. Since our method can handle negative distances, we use the U estimator. We note the difference between the two estimators is $\mathcal{O}(B^{-2})$ and makes very little difference for large batches. 

\subsection{Hyperparameters} \label{app:hardware}
We use deep neural networks for our model to parametrize  \Cref{eq:pi_theta_param}. The same model hyperparameters are used for all experiments. Our models have hidden width $1024$ and depth $3$. The input and output dimensions are determined based on the parent nodes and output categories. The ReLU activation function \cite{relu} is used for the hidden layers, with the Softmax activation on the output to constrain the output to valid probabilities. Our model is trained with the Adam optimiser \cite{adam} with a cosine decay schedule and maximum learning rate of 1e-4 and 24 epochs with 256 steps per epoch. The linear fit threshold was selected as $\tau = 1.25$. Our hyperparameters were tuned on a single Ternary IV graph, like the baselines. The random seed for each run is the same as the one used to generate the datasets, that is $\{0, \ldots,9\}$. 

Finally, our method is implemented using PyTorch \cite{pytorch} version 2.13. All experiments and timings (including baselines) are conducted using Ubuntu Linux 26.04. The hardware is as follows: GPU NVIDIA RTX 3090, CPU AMD RYZEN 5800X3D and 32GB DDR4 system memory. 

\section{Baselines and Evaluation Metrics}\label{apx:baseline}

For each dataset, we compute the observed distribution $\mathbf{q}$ using $B=1000000$ samples. The baselines compute their predicted bounds using these observed probabilities. 

The baselines are implemented using the official packages / code. 

Autobounds \cite{Autobounds} uses the default configuration, used from their Docker container. 

CausalOptim \cite{causaloptim_pkg} generates analytical bounds given a causal graph, given as a series of equations. For each graph type, we generate the analytical equations for the upper and lower bounds. 

NeuralPartialID \cite{NeuralPartlID} uses the official implementation. We tuned the hyperparameters using a sequential grid search to find the best parameters. The range of parameters searched were: Width [16, 32, 64, 128, 256, 512, 1024], depth [2, 3, 4], learning rate [3e-4, 1e-3, 3e-3], lagrange learning rate [0.1, 0.3, 0.5, 0.7]. Compared to the original implementation, we settled on increasing the depth of the NNs from 2 to 3, increasing the Lagrange learning rate from 0.1 to 0.5 and increasing the width from 32 to 64 with a max training length of 500 epochs, which we found gives the best performance on a standard task. 

BCCE \cite{BoundedCausalEffects} also uses the official implementation. We used the \textit{Nonlinear} model class for our experiments so the models are general non-linear neural networks. The default hyperparameters were very unstable and yielded poor results, so the parameters were manually tuned. For the final hyperparameters, the width of both the generator and critic was increased from 16 to 64, and the depth of the generator was increased to 3. The generator's optimiser was switched from RMSProp \cite{RMSProp} to AdamW, while RMSProp was kept for the discriminator. The learning rate was increased from 3e-4 to 5e-2 for both the discriminator and generator. Finally, a stepwise learning rate decay schedule was added of 0.5 every 200 epochs. These changes significantly increased the accuracy. Increasing the training duration from the default 2000 epochs did not improve performance, but led to training instabilities so the duration was fixed at 500 epochs. 

The hyperparameters for both BCCE and NeuralPartialID were configured on a Ternary IV graph sample. 

\section{Finite-Support Energy Distance}
\label{app:finite_support_energy_distance}

This section proves the finite-support representation in
\Cref{theorem:energy_dist}. Let
\(\mathbf{p}=(p_1,\ldots,p_K)^\top\) and
\(\mathbf{q}=(q_1,\ldots,q_K)^\top\). Let
\(\mathbf X,\mathbf X'\sim\mathbf p\) and
\(\mathbf Y,\mathbf Y'\sim\mathbf q\) independently.

\begin{proof}[Proof of \Cref{theorem:energy_dist}]
Expanding the cross-distribution expectation gives
\[
\begin{split}
\E\|\mathbf X-\mathbf Y\|_2^\beta
&=
\sum_{k,\ell=1}^{K}
P(\mathbf X=\mathbf s_k)P(\mathbf Y=\mathbf s_\ell)
\|\mathbf s_k-\mathbf s_\ell\|_2^\beta \\
&=\sum_{k,\ell=1}^{K} p_kq_\ell \mathbf D_{k\ell}
=\mathbf p^\top\mathbf D\mathbf q.
\end{split}
\]
Likewise,
\[
\E\|\mathbf X-\mathbf X'\|_2^\beta
=\sum_{k,\ell=1}^{K} p_kp_\ell \mathbf D_{k\ell}
=\mathbf p^\top\mathbf D\mathbf p,
\]
and
\[
\E\|\mathbf Y-\mathbf Y'\|_2^\beta
=\sum_{k,\ell=1}^{K} q_kq_\ell \mathbf D_{k\ell}
=\mathbf q^\top\mathbf D\mathbf q.
\]
Substituting these terms into \Cref{eq:energy_distance} gives
\[
\begin{split}
\mathrm{ED}_\beta(\mathbf p,\mathbf q)
&=2\mathbf p^\top\mathbf D\mathbf q
-\mathbf p^\top\mathbf D\mathbf p
-\mathbf q^\top\mathbf D\mathbf q \\
&=-(\mathbf p-\mathbf q)^\top\mathbf D(\mathbf p-\mathbf q),
\end{split}
\]
where the final equality uses the symmetry of \(\mathbf D\).
\end{proof}

By Schoenberg's theorem \cite{schoenberg_theorem}, \(\mathbf D\) is strictly
conditionally negative definite for distinct support points and
\(0<\beta<2\). Since \(\mathbf p-\mathbf q\) has zero sum, the quadratic
form above is nonnegative and is zero if and only if \(\mathbf p=\mathbf q\).

\section{KKT Perturbation Derivation}
\label{app:kkt_perturbation}

The main text uses softmax outputs, so the raw neural parameters $\theta$
are unconstrained while the conditional probability vectors are constrained
to the simplex implicitly. The KKT derivation below is written in ambient
mechanism coordinates rather than in the raw neural coordinates. This makes
the derivation applicable to the induced solution path without requiring the
raw parameter vectors to converge or be unique.

\subsection{Constrained mechanism formulation}
Let $\eta$ denote local coordinates for the conditional mechanism
probabilities, and let $\boldsymbol\pi(\eta)$ collect all corresponding
probability vectors. For each $s\in\{+,-\}$, we study a local-minimiser
branch of the abstract constrained problem
\begin{equation*}
    \begin{aligned}
        \min_{\eta}\quad &
        \mathcal{L}_{\mathrm{ED}}(\eta;\widehat{\mathbf q})
        +\lambda a(\eta) \\
        \text{subject to}\quad &
        \mathbf A(\eta)=\mathbf 0,
        \qquad \mathbf B(\eta)\leq\mathbf 0.
    \end{aligned}
\end{equation*}
where the simplex constraints are included in $\mathbf A$ and $\mathbf B$.
For example, each conditional mechanism satisfies
\[
    \sum_j\pi_{i,j}(\eta)=1,
    \qquad
    -\pi_{i,j}(\eta)\leq0.
\]

The neural model in the main text is a smooth feasible map
\[
    \eta=\Phi(\theta),
    \qquad
    \boldsymbol\pi(\Phi(\theta))
    =\operatorname{softmax}\bigl(g(\theta)\bigr).
\]
Thus, the explicit constraints are eliminated in the implementation, but
the induced observational distribution and causal estimand lie in the same
feasible mechanism space. At strictly positive probability vectors, the
softmax map is locally onto the simplex tangent space because its Jacobian
has rank $k-1$. We therefore use the constrained coordinates $\eta$ only
for the local sensitivity calculation. The result transfers to the softmax
model whenever $\Phi$ is locally expressive on the feasible tangent space.
Softmax does not represent boundary probabilities exactly; boundary
solutions are understood as limits of the interior analysis.

\subsection{Local regularity assumptions}
The following assumptions make the constrained KKT system a valid local
description of the solution paths in \Cref{as:smooth}. Fix a regular
reference point $(\eta_0,\lambda_0)$ with $\lambda_0\neq0$, and let
\(\mathcal A=\{r:B_r(\eta_0)=0\}\) be the active inequality set.

First, $\mathcal{L}_{\mathrm{ED}}(\eta;\widehat{\mathbf q})$, $a(\eta)$,
$\mathbf A(\eta)$, and $\mathbf B(\eta)$ are twice continuously
differentiable near $\eta_0$. The active set is fixed under the local
perturbation, and the rows of
\[
    \mathbf{C}_0 =
    \begin{bmatrix}
        \nabla_\eta\mathbf{A}(\eta_0) \\
        \nabla_\eta\mathbf{B}_{\mathcal{A}}(\eta_0)
    \end{bmatrix}
\]
are linearly independent (LICQ). We assume strict complementarity when
needed to ensure that the active set remains fixed.

Finally, the bordered KKT matrix is nonsingular and the second-order
sufficient conditions hold on the critical cone. The former gives a locally
unique differentiable KKT branch by the implicit function theorem, while
the latter ensures that this branch consists of local minimisers. We perform
the calculation at regular points with $\lambda_0\neq0$ and then take a
one-sided limit as $\lambda_0\to0$. Nonsingularity is not required at the
possibly unidentifiable exact-fit problem at $\lambda=0$.

\subsection{KKT perturbation}
Define the Lagrangian
\begin{equation*}
    L(\eta,\mu,\nu;\lambda)
    = \mathcal{L}_{\mathrm{ED}}(\eta;\widehat{\mathbf q})
    +\lambda a(\eta)
    +\mathbf A(\eta)^\top\mu
    +\mathbf B(\eta)^\top\nu.
\end{equation*}
At the reference point, the KKT conditions are
\begin{align*}
    \nabla_\eta L(\eta_0,\mu_0,\nu_{0,\mathcal{A}};\lambda_0) &= \mathbf{0}, \\
    \mathbf{A}(\eta_0) &= \mathbf{0}, \\
    \mathbf{B}_{\mathcal{A}}(\eta_0) &= \mathbf{0}, \\
    \mathbf{B}_{\mathcal{A}^c}(\eta_0) &< \mathbf{0}, \\
    \nu_{0,\mathcal{A}} &\geq \mathbf{0}, \\
    \nu_{0,\mathcal{A}^c} &= \mathbf{0}.
\end{align*}
The third and sixth conditions, together with the active-set definition, imply complementary slackness. The perturbed point is
\[
    (\eta_1,\mu_1,\nu_{1,\mathcal{A}};\lambda_1)
    =
    (\eta_0+\delta\eta,
    \mu_0+\delta\mu,
    \nu_{0,\mathcal{A}}+\delta\nu_{\mathcal{A}};
    \lambda_0+\delta\lambda).
\]
Since the active set is fixed, its stationarity and feasibility conditions are
\begin{align*}
    \nabla_\eta L(\eta_1,\mu_1,\nu_{1,\mathcal{A}};\lambda_1) &= \mathbf{0}, \\
    \mathbf{A}(\eta_1) &= \mathbf{0}, \\
    \mathbf{B}_{\mathcal{A}}(\eta_1) &= \mathbf{0}.
\end{align*}

For compactness, define
\begin{align*}
    \mathbf p_0 &= \mathbf p(\eta_0),
    & a_0 &= a(\eta_0), \\
    \mathbf{A}_\eta &= \nabla_\eta\mathbf{A}(\eta_0), \\
    \mathbf{B}_\eta &= \nabla_\eta\mathbf{B}_{\mathcal{A}}(\eta_0), \\
    \mathbf{C}_0 &=
    \begin{bmatrix}
        \mathbf{A}_\eta \\
        \mathbf{B}_\eta
    \end{bmatrix}, \\
    \mathbf{J}_0 &= \nabla_\eta\mathbf{p}(\eta_0), \\
    \mathbf{H}_0
    &= \nabla_\eta^2 L(\eta_0,\mu_0,\nu_{0,\mathcal{A}};\lambda_0) \\
    &= -2\mathbf{J}_0^\top\mathbf{D}\mathbf{J}_0 \\
    &\quad - 2\sum_{k=1}^K
     \left[\mathbf{D}(\mathbf{p}_0-\widehat{\mathbf q})\right]_k
     \nabla_\eta^2p_k(\eta_0) \\
    &\quad + \lambda_0\nabla_\eta^2a(\eta_0)
    + \sum_r\mu_{0,r}\nabla_\eta^2A_r(\eta_0) \\
    &\quad + \sum_{r\in\mathcal{A}}\nu_{0,r}\nabla_\eta^2B_r(\eta_0).
\end{align*}
Subtracting the reference-point stationarity condition from the perturbed stationarity condition and expanding to first order yields
\begin{align*}
    \mathbf{0}
    ={}& \mathbf{H}_0\delta\eta
    + \mathbf{A}_\eta^\top\delta\mu
    + \mathbf{B}_\eta^\top\delta\nu_{\mathcal{A}}
    + \nabla_\eta a(\eta_0)\delta\lambda
    + \mathcal{O}(\delta\lambda^2).
\end{align*}
Similarly, subtracting the reference feasibility conditions from their perturbed counterparts gives
\begin{align*}
    \mathbf{0}
    &= \mathbf{A}_\eta\delta\eta
    + \mathcal{O}(\delta\lambda^2), \\
    \mathbf{0}
    &= \mathbf{B}_\eta\delta\eta
    + \mathcal{O}(\delta\lambda^2).
\end{align*}
Writing $\delta\xi=(\delta\mu,\delta\nu_{\mathcal{A}})$ and stacking the stationarity and feasibility expansions gives
\begin{equation*}
    \begin{bmatrix}
        \mathbf{H}_0 & \mathbf{C}_0^\top \\
        \mathbf{C}_0 & \mathbf{0}
    \end{bmatrix}
    \begin{bmatrix}
        \delta\eta \\
        \delta\xi
    \end{bmatrix}
    =
    -\delta\lambda
    \begin{bmatrix}
        \nabla_\eta a(\eta_0) \\
        \mathbf{0}
    \end{bmatrix}
    + \mathcal{O}(\delta\lambda^2).
\end{equation*}
Assuming this bordered KKT matrix is nonsingular, define
\[
    \begin{bmatrix}
        \mathbf{w}_\eta \\
        \mathbf{w}_\xi
    \end{bmatrix}
    =
    \begin{bmatrix}
        \mathbf{H}_0 & \mathbf{C}_0^\top \\
        \mathbf{C}_0 & \mathbf{0}
    \end{bmatrix}^{-1}
    \begin{bmatrix}
        \nabla_\eta a(\eta_0) \\
        \mathbf{0}
    \end{bmatrix}.
\]
In particular, the parameter component is
\[
    \mathbf w_\eta
    =
    \begin{bmatrix}
        \mathbf I & \mathbf 0 & \mathbf 0
    \end{bmatrix}
    \begin{bmatrix}
        \mathbf H_0 & \mathbf A_\eta^\top & \mathbf B_\eta^\top \\
        \mathbf A_\eta & \mathbf 0 & \mathbf 0 \\
        \mathbf B_\eta & \mathbf 0 & \mathbf 0
    \end{bmatrix}^{-1}
    \begin{bmatrix}
        \nabla_\eta a(\eta_0) \\
        \mathbf 0 \\
        \mathbf 0
    \end{bmatrix}.
\]
It follows that
\begin{align*}
    \delta\eta &= -\delta\lambda\mathbf{w}_\eta + \mathcal{O}(\delta\lambda^2), \\
    \delta\xi &= -\delta\lambda\mathbf{w}_\xi + \mathcal{O}(\delta\lambda^2).
\end{align*}
Finally, using the leading-order Taylor expansions for $\mathbf p(\eta)$ and
$a(\eta_1)$,
\begin{align*}
    \mathbf{p}(\eta_1)
    &= \mathbf{p}_0-\delta\lambda\mathbf{J}_0\mathbf{w}_\eta
     + \mathcal{O}(\delta\lambda^2), \\
    a(\eta_1)
    &= a_0-\delta\lambda\nabla_\eta a(\eta_0)^\top\mathbf{w}_\eta
     + \mathcal{O}(\delta\lambda^2).
\end{align*}
When $\mathbf{p}_0=\widehat{\mathbf q}$,
\begin{align*}
    \mathcal{L}_{\mathrm{ED}}(\eta_1;\widehat{\mathbf q})
    &= -\bigl(\mathbf{p}(\eta_1)-\widehat{\mathbf q}\bigr)^\top
    \mathbf{D}
    \bigl(\mathbf{p}(\eta_1)-\widehat{\mathbf q}\bigr) \\
    &= -\delta\lambda^2(\mathbf{J}_0\mathbf{w}_\eta)^\top
    \mathbf{D}(\mathbf{J}_0\mathbf{w}_\eta)
    + \mathcal{O}(\delta\lambda^3).
\end{align*}
There are no constant or $\mathcal{O}(\delta\lambda)$ terms at
$\mathbf{p}_0=\widehat{\mathbf q}$. Since $\mathbf{D}$ is conditionally
negative definite, the quadratic leading term is nonnegative.

For the one-sided branches in the main text, take regular reference points
along $\eta_s(\lambda)$ and let $\lambda_0\to0$ from the corresponding
side. Writing
\(
    \mathbf J_s(\lambda_0)
    =\nabla_\eta\mathbf p(\eta_s(\lambda_0))
\)
and letting $\mathbf w_{\eta,s}(\lambda_0)$ denote the KKT response for
that branch, the KKT calculation gives
\[
    \left.\frac{\mathrm d}{\mathrm d\lambda}a(\eta_s(\lambda))
    \right|_{\lambda=\lambda_0}
    =-\nabla_\eta a(\eta_s(\lambda_0))^\top
      \mathbf w_{\eta,s}(\lambda_0),
\]
and
\[
    \left.\frac{\mathrm d}{\mathrm d\lambda}\mathbf p(\eta_s(\lambda))
    \right|_{\lambda=\lambda_0}
    =-\mathbf J_s(\lambda_0)\mathbf w_{\eta,s}(\lambda_0).
\]
Under the one-sided smoothness assumption in the main text, these induced
quantities have one-sided limits at zero. Writing
\[
    c_s
    =\left.\frac{\mathrm d}{\mathrm d\lambda}
      a(\theta_s(\lambda))\right|_{\lambda=0},
    \qquad
    \mathbf v_s
    =\left.\frac{\mathrm d}{\mathrm d\lambda}
      \mathbf p_{\theta_s(\lambda)}\right|_{\lambda=0},
\]
the preceding expansions become
\[
    a(\theta_s(\lambda))
    =a(\theta_s(0))+c_s\lambda+O(\lambda^2)
\]
and
\[
    \mathcal L_{\mathrm{ED}}
    (\theta_s(\lambda);\widehat{\mathbf q})
    =-\lambda^2\mathbf v_s^\top\mathbf D\mathbf v_s
    +O(|\lambda|^3).
\]
Since $\mathbf v_s$ has zero sum, conditional negative definiteness of
$\mathbf D$ gives $-\mathbf v_s^\top\mathbf D\mathbf v_s\geq0$. This is
the local extrapolation result used in the main text. The KKT calculation
does not by itself establish that the branch is globally optimal; the
global-minimiser assumption is what is needed for the bound-recovery
corollary.

\section{Analytical Bounds}\label{app:mediated_bounds}
In this section, we give analytical bounds for the Mediated graph, $Z \rightarrow T \leftrightarrow M \rightarrow Y$. Here $ \leftrightarrow$ denotes mutual confounding between $T$ and $M$. All of $Z, T, M, Y$ are binary, there is no restriction on the confounding variable. This graph structure is beyond the assumptions required by CausalOptim. 

Let the ATE from intervening on node $A$ with target node $B$ be $\textrm{ATE}_{A \rightarrow B}$. 

\begin{theorem}
Let $\delta_Y = \textrm{Pr}(Y=1|M=1) - \textrm{Pr}(Y=1|M=0)$. Then $\textrm{ATE}_{T \rightarrow Y} = \delta_Y \textrm{ATE}_{T \rightarrow M}$.
\end{theorem}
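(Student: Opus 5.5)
The plan is to exploit the fact that, in the Mediated graph of \Cref{fig:causal_graphs}, the outcome $Y$ has $M$ as its only endogenous parent and its own exogenous input is independent of $U$, $Z$ and everything upstream. Write the structural equation as $Y = f_Y(M, U_Y)$ with $U_Y$ independent of $(U, U_Z, U_T, U_M)$. Here $U$ is the shared confounder of $T$ and $M$, and the others are any node-specific exogenous inputs. Set $g(m) := \Pr(f_Y(m,U_Y)=1)$. Because $Y$ is binary, $\E[Y\mid \cdot] = \Pr(Y=1\mid\cdot)$ throughout.

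\emph{Step 1: identify $g$ from observational data.} Since $M$ is a function of $(Z,T,U)$ and their exogenous inputs only, $M$ is independent of $U_Y$. Hence, whenever $\Pr(M=m)>0$,
\[
\Pr(Y=1\mid M=m)=\Pr(f_Y(m,U_Y)=1\mid M=m)=g(m).
\]
This shows $\delta_Y = g(1)-g(0)$. It is the usual statement that there is no back-door path from $M$ to $Y$: $U$ points only into $T$ and $M$.

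\emph{Step 2: compute the interventional mean.} Under $\operatorname{do}(T=t)$, the truncated factorisation of \Cref{sec:partial-identification} leaves the mechanisms of $M$ and $Y$ unchanged. Also, $M_t := f_M(t, U, U_M)$ remains independent of $U_Y$. Conditioning on $M_t$ therefore gives
\[
\E[Y\mid \operatorname{do}(T=t)] = \sum_{m\in\{0,1\}} \Pr(M=m\mid\operatorname{do}(T=t))\, g(m) = g(0) + \Pr(M=1\mid \operatorname{do}(T=t))\,\bigl(g(1)-g(0)\bigr),
\]
where the last equality uses that $M$ is binary.

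\emph{Step 3: take the contrast.} Subtracting the $t=0$ expression from the $t=1$ expression cancels $g(0)$. This leaves $\bigl(\Pr(M=1\mid\operatorname{do}(T=1))-\Pr(M=1\mid\operatorname{do}(T=0))\bigr)(g(1)-g(0))$, which equals $\delta_Y\,\mathrm{ATE}_{T\to M}$ by Step 1.

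The only delicate point is Step 1. It needs the positivity condition $\Pr(M=0),\Pr(M=1)>0$ so that $\delta_Y$ is well defined, and I would state this explicitly as a standing assumption. It also needs a careful argument that $U_Y$ is independent of $M$ even though $M$ is confounded with $T$. Both hold because the confounder $U$ has no edge into $Y$.
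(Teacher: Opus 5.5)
Your proposal is correct and follows essentially the same route as the paper, which invokes the mediation formula $\Pr(Y=1\mid\operatorname{do}(T=t))=\sum_{m}\Pr(Y=1\mid M=m)\Pr(M=m\mid\operatorname{do}(T=t))$ (justified by full mediation through $M$ and no confounding of $M\to Y$) and then takes the $t=1$ minus $t=0$ contrast using that $M$ is binary. Your Steps 1--2 derive that formula explicitly from $Y=f_Y(M,U_Y)$ with $U_Y$ independent of the upstream variables, and your positivity condition $\Pr(M=0),\Pr(M=1)>0$ makes explicit an assumption the paper leaves implicit in the definition of $\delta_Y$.
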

\begin{proof} 
    Because the effect of $T$ on $Y$ is fully mediated by $M$, and the effect of $M$ on $Y$ is unconfounded, the mediation formula gives 
    \begin{align*}
        \Pr(Y=1 \mid \operatorname{do}(T=t)) = \\
        \sum_{m \in \{0,1\}} \Pr(Y=1 \mid M=m) \Pr(M=m \mid \operatorname{do}(T=t)). 
    \end{align*}
    Taking the difference between $t=1$ and $t=0$ yields 
    \begin{equation*}
    \begin{aligned}
        \operatorname{ATE}_{T \rightarrow Y}
        &= \delta_Y \Bigl[
            \Pr\bigl(M=1 \mid \operatorname{do}(T=1)\bigr) \\
        &\qquad\quad
            - \Pr\bigl(M=1 \mid \operatorname{do}(T=0)\bigr)
        \Bigr] \\
        &= \delta_Y \operatorname{ATE}_{T \rightarrow M}.
    \end{aligned}
    \end{equation*}
    where the final equality follows because $M$ is binary. 
\end{proof}
Thus, in order to compute bounds of $\textrm{ATE}_{T \rightarrow Y}$, we only need the bounds of $\operatorname{ATE}_{T \rightarrow M}$. This is exactly the binary IV case which is known \cite{balke1997bounds}. 
\section{Synthetic Data Generation} \label{app:data_generation}

Given a causal directed acyclic graph (DAG), we construct a structural
equation model (SEM) and generate its nodes in topological order. Let
\(\mathcal{V}\) denote the set of endogenous nodes in the graph and
\(\mathcal{U}\) the set of exogenous variables. For each node
\(X_v\), \(v\in\mathcal{V}\), we distinguish between its endogenous parents
\[
    \operatorname{pa}_{\mathcal V}(v)
    \subseteq \mathcal V
\]
and its exogenous parents
\[
    \operatorname{pa}_{\mathcal U}(v)
    \subseteq \mathcal U.
\]
Endogenous parents correspond to other variables represented explicitly in
the causal graph, whereas exogenous parents represent latent sources of
dependence, such as unobserved confounders.

The exogenous variables and node-specific noise terms are sampled as
\begin{align*}
    U_j &\sim \operatorname{Uniform}(0,1),
    && j\in\mathcal U,\\
    \varepsilon_v &\sim \operatorname{Uniform}(0,1),
    && v\in\mathcal V,
\end{align*}
with the \(U_j\)'s and \(\varepsilon_v\)'s mutually independent unless the
graph explicitly specifies a shared exogenous parent. Each endogenous node is
then generated according to the structural assignment
\begin{equation*}
    X_v
    =
    f_v\left(
        X_{\operatorname{pa}_{\mathcal V}(v)},
        U_{\operatorname{pa}_{\mathcal U}(v)},
        \varepsilon_v
    \right),
    \qquad v\in\mathcal V,
\end{equation*}
where \(X_{\operatorname{pa}_{\mathcal V}(v)}\) and
\(U_{\operatorname{pa}_{\mathcal U}(v)}\) denote the collections of
endogenous and exogenous parent values, respectively. A shared exogenous
parent may enter the structural assignments of several nodes, thereby
inducing latent dependence between them.

\subsection{Random structural functions}

The nonlinear structural mechanisms are sampled once when a dataset instance
is created. They are represented by random Fourier functions mapping
\([0,1]\) to \((0,1)\). For a scalar input \(x\), a function is defined as
\begin{align*}
    g(x)
    &=
    b+
    \sum_{k=1}^{K}
    \left[
        a_k\cos\!\left(2\pi(k+\tfrac{1}{2})x\right)
        +
        c_k\sin\!\left(2\pi(k+\tfrac{1}{2})x\right)
    \right], \\
    r(x) &= \operatorname{sigmoid}\!\left(\alpha g(x)\right).
\end{align*}
where we have \(K=10\) Fourier components and
\(\alpha=2\). The coefficients are independently sampled as
\[
    b\sim\mathcal N(0,1),\qquad
    a_k,c_k\sim\mathcal N\!\left(0,k^{-2\delta}\right),
    \qquad \delta=1.5.
\]
Thus, the standard deviation of the coefficient at frequency \(k\) is
\(k^{-\delta}\). This decay suppresses high-frequency components and produces
smooth but nonlinear functions. Half-integer frequencies are used to avoid
forcing the function to have equal values at the two boundaries of the unit
interval.

A separate random function is drawn for each relevant configuration of the
discrete parents. For example, with binary \(Z\), the treatment probability is
\begin{equation*}
    p_D(z,u)=
    \begin{cases}
        r_{D,0}(u), & z=0,\\
        r_{D,1}(u), & z=1,
    \end{cases}
\end{equation*}
and the binary treatment is generated by inverse-transform sampling,
\begin{equation*}
    D=\mathbf{1}\!\left\{\varepsilon_D<p_D(Z,U)\right\}.
\end{equation*}
For a multi-valued parent, such as a ternary instrument, one independent
function is constructed for each parent state and the function corresponding
to the realised state is selected.

For a categorical node with \(k>2\) states, the generator constructs
\(k\) random scores \(s_0,\ldots,s_{k-1}\). In the ternary-outcome
experiments, it first obtains a treatment-specific scalar representation,
\[
    h(D,U)=
    \begin{cases}
        r_{h,0}(U), & D=0,\\
        r_{h,1}(U), & D=1,
    \end{cases}
\]
and evaluates the category scores at this value. The conditional
probabilities are
\begin{equation*}
    \pi_j(D,U)
    =
    \frac{\exp\{s_j(h(D,U))\}}
         {\sum_{\ell=0}^{k-1}\exp\{s_\ell(h(D,U))\}},
    \qquad j=0,\ldots,k-1.
\end{equation*}
The realized category is then obtained using the cumulative probabilities:
\begin{equation*}
    Y
    =
    \min\left\{
        j:
        \varepsilon_Y
        \leq
        \sum_{\ell=0}^{j}\pi_\ell(D,U)
    \right\}.
\end{equation*}

The Fourier coefficients determine the structural mechanisms and remain fixed
for all observations in one dataset. Changing the dataset seed draws a new
collection of functions and therefore a new SEM. Conditional on these fixed
mechanisms, individual observations differ because the root variables and
exogenous noise variables are resampled.

\subsection{Observational samples}

An observational dataset of size \(b\) is obtained by repeatedly sampling the
root and noise variables and then evaluating all structural assignments in
topological order. 

Since the problem is only partially identified, we do not use the ATE of the SEM used to generate the dataset. There are many other SEMs with identical observational distributions. Rather, we use the observational distribution to compute bounds on the ATE. This is done by monte-carlo sampling for $\mathbf{q}$ over a large number of observations, then $\mathbf{q}$ is used in the analytic solver for the true bounds. 

\subsection{Graph details} 
For each graph, the $U$ node is unobserved, acting as unobserved confounding. In the Binary IV, Proxy and Mediated graphs, all observed variables are binary. In the Ternary IV graph, the treatment remains binary (so ATE is easily calculated) while node $Z$ and node $Y$ are both ternary.
The datasets are generated using a random seeds from 0 to 9, with the exception of the Mediated dataset. Many datasets from this graph gave nearly 0 ATE. Instead, for this graph, the first 10 datasets to give nontrivial ATE values were used. The final seeds were $\{1, 3, 5, 6, 7, 10, 11, 16, 17\}$.

\section{Analysis of the baselines} \label{app:baseline_eval}
In \Cref{section:baseline}, we found that the baseline methods did not scale with model size. Specifically, the model performance degraded as the neural network widths increased, even with $\mu$P learning rate scaling. In this section, we analyse the cause of this failure. 

BCCE is a GAN architecture, where a generator and discriminator are adversarially optimised. As the NN widths were scaled up, we found an instability where the discriminator becomes too strong and no useful gradient feedback is given to the generator. This can be seen by the high generator error as the with of the NNs becomes large, as seen in \Cref{tab:gan_loss_width} where the model is evaluated on a ternary IV example. The error measured is the largest absolute difference in probability between the predicted and true distributions. This failure mode is well documented and techniques to enhance their stability have been developed \cite{gan_instablitiy}, but is undesirable for the user. As a result of this issue, BCCE struggles on more complex tasks. 

Similarly, NeuralPartialID's performance slightly degrades with larger NN widths. This method is based on a Lagrange multiplier method, shared with \cite{balazadeh2022partial} that solves the minimax optimisation problem,
\[
\min_{\theta}\max_{\chi \ge 0}
\operatorname{ATE}_{\theta}
+
\chi\left(
S\left(P_{\theta}, P\right)-\alpha
\right)
\]
where $S\left(P_{\theta},P\right)$ denotes the Sinkhorn distance \cite{sinkhorn_loss} between the predicted and observed distributions. The model parameters $\theta$ are updated by gradient descent, while the Lagrange multiplier ($\chi$) is updated according to
$\chi \leftarrow \chi + \chi_{\textrm{LR}}\left[S\left(P_{\theta},P\right)-\alpha\right]$.
The fit quality threshold $\alpha$ is derived from a pretraining phase where the average treatment effect is not optimised. 

In principle, the exact constrained solution is recovered in the limiting case $\chi\rightarrow\infty$ and $\alpha=0$. In practice, however, the update rate of $\chi$ must be carefully balanced against the convergence rate of the model parameters. If these rates are poorly matched, training may become prohibitively slow or unstable. Moreover, the resulting bounds are highly sensitive to the choice of $\alpha$. In practice, the algorithm converges to $\alpha > 0$ and the constraint is relaxed, which tends to produce bounds that are wider than the theoretically optimal bounds. \Cref{tab:npid_parameters} displays the fitted $\alpha$ and final $\chi$ values for different width models, on the same IV ternary example, where we see $\alpha$ is always around $0.13$ and $\chi$ is unstable and fluctuates as NN width is increased. The desired scaling of $\chi\rightarrow\infty$ does not occur. 

\begin{table}
\centering
\small
\setlength{\tabcolsep}{3pt}
\begin{tabular}{lcc}
\hline
Width & Lower error & Upper error \\
\hline
16   & 0.008 & 0.010 \\
32   & 0.010 & 0.011 \\
64   & 0.009 & 0.014 \\
128  & 0.062 & 0.151 \\
256  & 0.371 & 0.399 \\
512  & 0.572 & 0.780 \\
1024 & 0.855 & 0.931 \\
\hline
\end{tabular}
\caption{Error of the GAN generator from BCCE for different NN widths for the upper bound model and lower bound model. Lower error indicates better fit. The learning rate follows $\mu$P scaling.}
\label{tab:gan_loss_width}
\end{table}
\begin{table}[t]
\centering
\small
\setlength{\tabcolsep}{3pt}
\begin{tabular}{lccc}
\hline
Width & $\alpha$ & $\chi_{\min}$ & $\chi_{\max}$ \\
\hline
16   & 0.131 & 12.48 & 10.55 \\
32   & 0.132 & 22.54 & 24.51 \\
64   & 0.136 & 14.18 & 27.61 \\
128  & 0.136 & 0.21  & 36.56 \\
256  & 0.128 & 7.62  & 20.97 \\
512  & 0.137 & 8.07  & 29.98 \\
1024 & 0.135 & 20.15 & 18.58 \\
\hline
\end{tabular}
\caption{Final parameters of the NeuralPartialID models for different width NNs. $\chi_{\min}$, and $\chi_{\max}$ are the final Lagrange multipliers, for the lower and upper bounds and $\alpha$ is the fit quality threshold. 
}
\label{tab:npid_parameters}
\end{table}

\end{document}